\newtheorem{theorem}{Theorem}
\newtheorem{definition}[theorem]{Definition}
\newtheorem{lemma}[theorem]{Lemma}
\newtheorem{proposition}[theorem]{Proposition}
\newtheorem{remark}[theorem]{Remark}
\newenvironment{oss}{\begin{remark} \begin{rm}}{\end{rm} \end{remark}}
\newcommand{\ts}{{\mathcal T}}
\newcommand{\sss}{{\mathcal S}}
\newcommand{\F}{{\mathcal F}}
\newcommand{\GL}{\mathcal{GL}}
\newcommand{\XY}{\mathcal{XY}}
\newcommand{\SD}{\mathcal{SD}}
\newcommand{\G}{\mathcal{G}}
\newcommand{\W}{W}
\newcommand{\bGL}{\mathbb{GL}}
\newcommand{\bXY}{\mathbb{XY}}
\newcommand{\bSD}{\mathbb{SD}}
\newcommand{\R}{{\mathbb R}}
\newcommand{\Z}{{\mathbb Z}}
\newcommand{\C}{{\mathbb C}}
\newcommand{\Om}{\Omega}
\newcommand{\asexy}{{\mathcal{AXY}_\e}}
\newcommand{\asesd}{{\mathcal{ASD}_\e}}
\newcommand{\Torus}{\mathbb{T}}
\newcommand{\oez}{\Om_\e^0}
\newcommand{\oeu}{\Om_\e^1}
\newcommand{\oed}{\Om_\e^2}
\newcommand{\weak}{\rightharpoonup}
\newcommand{\e}{\varepsilon}
\newcommand{\en}{{\varepsilon_n}}
\newcommand{\de}{{\delta_{\varepsilon}}}
\newcommand{\f}{\varphi}
\newcommand{\dd}{{\bf{d}}}
\newcommand{\pr}{P}
\newcommand{\curl}{\text{curl }}
\newcommand{\dist}{\text{dist}}
\newcommand{\res}{\mathop{\hbox{\vrule height 7pt width .5pt depth 0pt
\vrule height .5pt width 6pt depth 0pt}}\nolimits}
\title
[Complex Ginzburg Landau energies, XY spin systems and 
dislocations] {Variational equivalence between Ginzburg-Landau, XY
spin systems and screw dislocations energies}
\author[R. Alicandro]
{Roberto Alicandro}
\author[M. Cicalese]
{Marco Cicalese}
\author[M. Ponsiglione]
{Marcello Ponsiglione}
\address[R. Alicandro]{DAEIMI, Universit\`a di Cassino via Di Biasio 43, 03043 Cassino (FR), Italy}
\email[R. Alicandro]{alicandr@unicas.it}
\address[M. Cicalese]{Dipartimento di Matematica e Applicazioni ``R.~Caccioppoli",
Universit\`a di Napoli ``Federico II" via Cintia, 80126 Napoli,
Italy} \email[M. Cicalese]{cicalese@unina.it}
\address[Marcello Ponsiglione]{Dipartimento di Matematica ``G. Castelnuovo", Universit\'a di Roma ``La Sapienza",
Piazzale A. Moro 2, 00185 Roma, Italy} \email[M.
Ponsiglione]{ponsigli@mat.uniroma1.it}
\begin{document}
\vskip .2truecm

\begin{abstract}
\small{We introduce and discuss discrete two-dimensional models
for $XY$ spin systems and screw dislocations in crystals. We prove
that, as the lattice spacing $\e$ tends to zero, the relevant
energies in these models behave like a free energy in the complex
Ginzburg-Landau theory of superconductivity, justifying in a
rigorous mathematical language  the analogies between screw
dislocations in crystals and vortices in superconductors.
To this purpose, we introduce a notion of asymptotic variational
equivalence between families of functionals in the framework
of $\Gamma$-convergence. We then prove that, in several scaling regimes, the complex
Ginzburg-Landau, the $XY$ spin system and the screw dislocation
energy functionals are variationally equivalent. Exploiting such an equivalence
between dislocations and vortices, we can show new
results concerning the asymptotic behavior of screw dislocations
in the $|\log\e|^2$ energetic regime.

\vskip .3truecm \noindent Keywords: Crystals, Discrete-to-continuum limits, Analysis of
microstructure, Topological singularities, Calculus of Variations.
\vskip.1truecm \noindent 2000 Mathematics Subject Classification:
49J45, 74N05, 74N15, 74G70, 74G65, 74C15, 74B15, 74B10.}
\end{abstract}
\maketitle

\vskip -.5truecm
{\small \tableofcontents}
\section{Introduction}
Since the pioneering papers of Berezinskii \cite{berezinskii},
Kosterlitz \cite{K} and Kosterlitz and Thouless \cite{KT}, there
has been a great effort in studying physical systems exploiting
BKT-phase transitions; {\it i.e.}, phase transitions mediated by
the formation of topological singularities of the order parameter.
This type of phase transitions characterizes several physical
phenomena such as superfluidity, superconductivity and plasticity
(see \cite{Kle_Lav}, \cite{Lo1}, \cite{Lo2}, \cite{Mermin}, \cite{Simons}), while vortices in
superconductivity and $XY$ spin systems, as well as screw
dislocations in crystals, provide three paradigmatic examples of
singularities.
\par
The phenomenological analogies shown by these apparently far physical systems have been
pointed out many times in the physical community. In the language of statistical mechanics it has also been rigorously proven that
these systems belong to the same universality class all of them sharing a BKT-type phase transition. Roughly speaking it is known
that above some temperature threshold, these systems undergo a
phase transition to a disordered state in which the topological singularities are unbound. Under that threshold
the correlation length exponentially decays and the singularities bind together
and interact through complex and mostly unknown pheno\-mena involving many interacting scales. Such
a complex behavior is the main reason why a detailed analysis of the ground states of these systems turns out to be a
non trivial task. Moreover, the above description explains, to some extent, why a qualitative approach
to the study of the thermodynamic limit of these systems, such as the celebrated Ginzburg-Landau theory, has been so successfully exploited.
\par
In this paper we are concerned with the problem of describing some
relevant properties of the ground states of these systems in the
thermodynamic limit. We aim to provide a unifying mathematical point of view, based on a variational equivalence argument, to study the asymptotic behavior of the ground states of different models that share the same geometrical
and topological qualitative features.
More specifically our purpose is two-fold.
On one hand,  we want to reinterpret  several known results about the asymptotic behavior of the
ground states of such models, proving that the corresponding free energies are indeed equivalent from a variational point of view. On the other hand, taking advantage of this equivalence, we want
to exploit some of the results currently proved only through a
phenomenological Ginzburg-Landau analysis, to obtain new
results in different contexts.
\par
As in \cite{KT}, among the physical systems exploiting topological
type phase transitions we focus on two-dimensional systems and we
choose two paradigmatic examples: screw dislocations ($SD$) and $XY$
spin systems. We will introduce two basic discrete models, both
constructed on $\e\Z^2\cap\Omega$, where $\Omega\subset\R^2$ is a  bounded open
set. Their order parameters are a unit
vectorial spin field for the $XY$ model:
$i\in\e\Z^2\cap\Omega\mapsto v(i)\in\R^2$ such that $|v(i)|=1$, and a scalar
displacement field for the $SD$ model:
$i\in\e\Z^2\cap\Omega\mapsto u(i)\in\R$. For a given configuration
of spins or displacements, the energies of these systems are given
by
\begin{equation*}
XY_\e(v):= \frac{1}{2} \sum_{i,j\in\e\Z^2\cap \, \Omega:\
|i-j|=\e}
 |{v(i)-v(j)}|^2
\end{equation*}
and
\begin{equation*}
SD_\e(u):= \frac{1}{2} \sum_{i,j\in \e\Z^2\cap \, \Omega:\
|i-j|=\e} \text{dist}^2(u(i)-u(j),\Z).
\end{equation*}
With the energies written in this form, the coarse-graining
analysis now amounts to study the limit, as $\e\to 0$, of (some
scaled version of) $XY_\e$ and $SD_\e$. To this purpose, in the
physical literature, it is customary to perform a so-called
Ginzburg-Landau (GL) analysis (see \cite{Kle_Lav} and \cite{Simons} for
an introduction to the subject and some applications). The main
ansatz of this approach (based on heuristic scaling and symmetry
type arguments) is to assume that some of the interesting features
of the thermodynamic limit of the original functionals can be
obtained by studying the limit, as $\e\to 0$, of a family of
so-called complex Ginzburg-Landau energies. These energies have as
order parameter a vectorial field $x\in\Omega\mapsto w(x)\in\R^2$
and are defined as
\begin{equation}\label{intro_GL}
GL_\e(w):= \int_{\Om} \frac{1}{2} |\nabla w|^2 +
\frac{1}{\e^2} (1-|w|^2)^2.
\end{equation}
This kind of functionals has been originally introduced as a
phenomenological phase-field type free-energy of a superconductor,
near the superconducting transition, in absence of an external
magnetic field. Here  the order parameter $w$ describes how deep
the system is into the superconducting phase and the scale $\e$ is
proportional to the coherence length of the superconductor.
\par
The Ginzburg-Landau functionals have deserved a great attention by
the physical community. The first rigorous mathematical approach
to the limit as $\e\to 0$ of the solutions to the Euler-Lagrange
equation of \eqref{intro_GL} has been made by Bethuel, Brezis and
H\'elein in \cite{BBH}. Since their paper a great effort has been
done to study the asymptotic behaviour of minimizers of the
Ginzburg-Landau energy both from the PDEs and the Calculus of
Variations points of view. In particular in dimension two Jerrard
and Soner in \cite{JS} (see also Alberti, Baldo and Orlandi \cite{ABO} for the 
generalization to any dimension) have proved a $\Gamma$-convergence result for
$\frac{GL_\e(w)}{|\log\e|}$. In their analysis  the relevant tool to track energy concentration 
is the asymptotic behavior of the Jacobians $J(w_\e)$ of sequences $(w_\e)$  equi-bounded in   energy.
In particular they prove that, up to subsequences, $J(w_\e)$ converges to a finite sum of Dirac deltas
whose support represents the vortex-like singularities of the limit field and that the $\Gamma$-limit is proportional to the
number of such singularities.

Only recently  a similar analysis in the context of spin systems and dislocations has  attracted much
attention in the  mathematical community and it has been carried on both in a continuous framework (see \cite{CG}, \cite{CL}, \cite{GLP}, \cite{GM2}) and in a discrete setting (see \cite{AlBrCi}, \cite{AC},  \cite{Alcigl}, \cite{Po}). As a further remark we underline that the asymptotic analysis of spin systems and discrete dislocations energy functionals is itself part of a wider interest in the discrete-to-continuum limits for more general models (see for example
\cite{BlLBLiARMA}, \cite{BlLBLirev} and \cite{brhandbook} (Chapter $11$) for
a review on this subject). \par In  \cite{AC} and
in \cite{Po} a $\Gamma$-convergence result for $XY$-spin systems
and for the $SD$ model is given in the $|\log\e|$ scaling regime. Here the $\Gamma$-convergence analysis is performed with respect to the convergence of the Jacobians of a suitable affine interpolation of the spin
variable $v$ for the $XY$ model, and with respect to the convergence of a suitable discrete notion of the ${\it curl}$ of the strain field $u$ for the $SD$ model. Roughly speaking, gathering together the main results of these two
papers, the following relations hold:
$$
\Gamma\hbox{-}\lim_{\e\to 0} \frac{GL_\e(w)}{|\log\e|} =
\Gamma\hbox{-}\lim_{\e\to 0} \frac{XY_\e(v)}{|\log\e|}=
\Gamma\hbox{-}\lim_{\e\to 0} \frac{4\pi^2 SD_\e(u)}{|\log\e|}.
$$
Motivated by this chain of equalities, we were led to ask whether one could provide, in the framework of $\Gamma$-convergence, a unifying mathematical 
point of view to rigorously relate the asymptotic behavior of these models. 
Our purpose is to prove that the asymptotic equivalence of  these models, in terms of $\Gamma$-convergence,  can be push  forward to any $|\log\e|^h$ scaling regime with $h\geq 1$. More precisely,  we show that 
\begin{equation}\label{intro:result}
\Gamma\hbox{-}\lim_{\e\to 0} \frac{GL_\e(w)}{|\log\e|^h}=
\Gamma\hbox{-}\lim_{\e\to 0} \frac{XY_\e(v)}{|\log\e|^h}=
\Gamma\hbox{-}\lim_{\e\to 0} \frac{4\pi^2 SD_\e(u)}{|\log\e|^h}.
\end{equation}
In this way we rigorously obtain the equivalence of
some mean field models for vortices, spin systems and dislocations, according with experimental evidence (see \cite{R} for a recent overview of the analogies  between the mean fields in these models). In particular, we obtain a rigorous justification to the $GL$  analysis of the thermodynamic limits of the $XY$ and $SD$ models in the $|\log\e|^h$ energetic regimes.
\par
To prove \eqref{intro:result}, we look for a relation between the order parameters of the different 
models, at proper mesoscopic scales, which allow us to compare the three families of energy
functionals  in the $|\log\e|^h$ scaling regime for every $h\geq 1$. 
This has led us to introduce a notion of {\it variational equivalence} between families of functionals 
(see Definition \ref{equivalence}). To explain the meaning
of such a notion, let us suppose that we are given
two families of energies $(F_\e)$ and $(G_\e)$ depending on
a small parameter $\e$ standing for an interaction scale. Then we say that $(F_\e)$ and $(G_\e)$ are variationally
equivalent if there holds that $G_\e\preceq F_\e$ and $F_\e\preceq G_\e$. By  
$G_\e\preceq F_\e$ we mean that, for any given family of order parameters $(p_\e)$ such that
$F_\e(p_\e)\leq C$, there exists another scale $\delta_\e$ 
and a family of order parameters $q_\e$ such that
$q_\e$ is closer and closer to  $p_\e$ and $G_{\delta_\e}(q_\e)\leq F_\e(p_\e)+O(1)$. 
Roughly speaking, we are saying that
the two variational models whose energies are given by $F_\e$ and $G_\e$ describe the same phenomena if looked at
proper interaction scales and by suitably choosing the order
parameters on these scales. The main feature of this notion, is
that variational equivalent families of functionals share the same
$\Gamma$-limit and share the so called {\it equi-coercivity}
property (see Theorems \ref{ordiga} and \ref{consequi}).
\par
We remark that our notion departs from the concept
of asymptotically equivalent functionals (at a certain order)
introduced by Braides and Truskinovsky in \cite{BrTr}. Roughly speaking, 
according to their definition, given $\alpha\geq 0$
the functionals $F_\e$ and $G_\e$ are said to be equivalent at order $\alpha$ if 
$\frac{F_\e}{\e^\alpha}$ and $\frac{G_\e}{\e^\alpha}$ have the same $\Gamma$-limit. In particular our 
definition turns out to agree with the latter at order zero provided $(F_\e)$ and $(G_\e)$ are equi-coercive. On
the other hand, to simplify
matter, the purpose of the authors in \cite{BrTr} is to introduce a formalism to build up
variational models that share the same $\Gamma$-limit up to a
certain given scaling order, and then to study the properties that such a convergence enjoys 
with respect to a given family of parameters specific
of the considered theory. Our aim is instead to deduce 
$\Gamma$-convergence and compactness results for the family $G_\e$ from 
the same results for the equivalent family $F_\e$.
\par
In Theorem \ref{mainthm} we prove the equivalence between the families of functio\-nals
$(\frac{GL_\e}{|\log\e|^h})$, $(\frac{XY_\e}{|\log\e|^h})$ and $(\frac{4\pi^2
SD_\e}{|\log\e|^h})$ (see also Theorem \ref{mthmj} for $h=2$). The way our variational equivalence is proved
provides an interesting identification between the order
parameters and the corresponding singularities of the different models. For instance, the
identification underlying the equivalence between $\frac{GL_\e}{|\log\e|^h}$ and
$\frac{XY_\e}{|\log\e|^h}$  relies on suitable interpolation procedures which allow us to
pass from the discrete order parameter of the $XY$ model to the
continuous one of the $GL$ model. Analogously the displacement field
in the $SD$ model is identified with the phase function of the
$XY$ order parameter (see also Remark \ref{ident}). These
identifications of the order parameters clearly induce an
identification of the corresponding singularities (see picture \ref{fig_intro}).
\begin{figure}\label{fig_intro}
\begin{center}
\includegraphics[scale=.5 ]{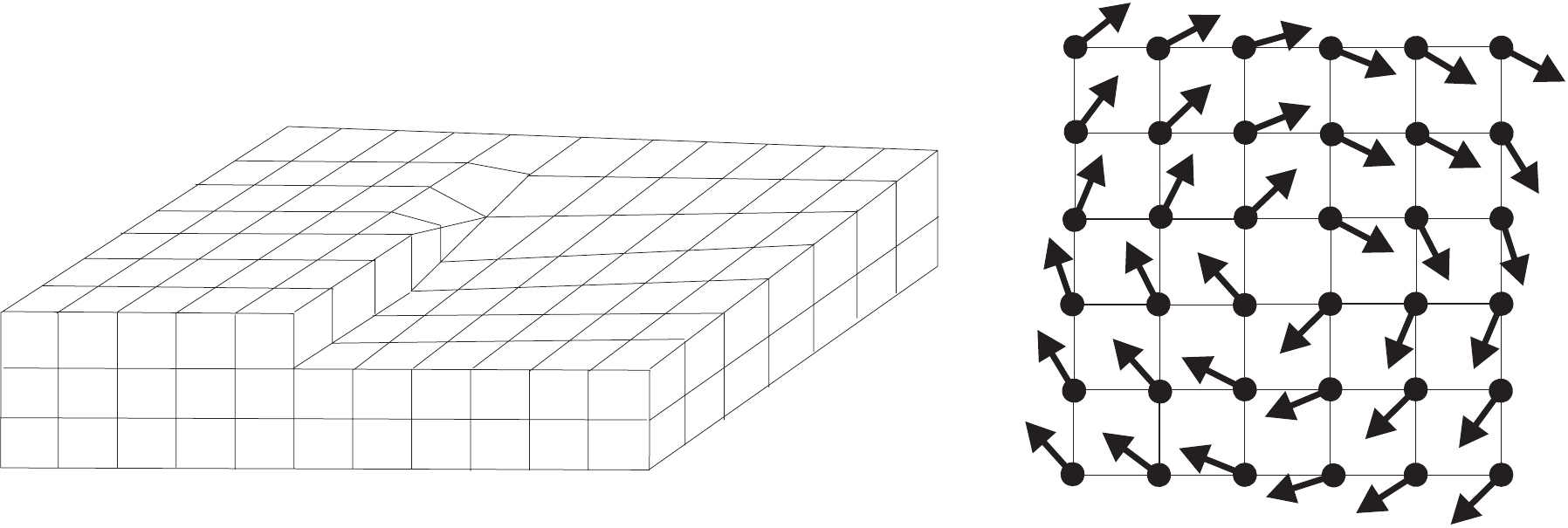}
\caption{A screw dislocation ``corresponds" to a vortex of spins.}
\end{center}
\end{figure}
In the proof of Theorem \ref{mainthm}, we have to make sure that the identification of the order
parameters in the different models produces small perturbations in
the corresponding free energy densities. This is easily checked
far from the singularities, while to control the error near the
singularities we need to introduce suitable reparametrizations
$\de$ of the correlation length, {\it i.e.,} we have to look at
the models at suitable meso-scales $\de$. Finally, during these
identifications, we have also to control the distance, measured in
a suitable topology, of the corresponding singularities. This
analysis involves notions of geometric measure theory, and the
arguments used in the proofs are close to those used to prove density of polyhedral 
boundaries in the space of integer currents in \cite{F} and also exploited in \cite{ABO}. 
\par
Taking advantage of this equivalence principle,  we are able to
export many of the  known results in the theory of $GL$ vortices to
the framework of $SD$ models. Indeed the $\Gamma$-convergence
results in \cite{JS2} in the $|\log\e|^2$ energetic regime,
together with \eqref{intro:result} for $h=2$, leads to new
asymptotic results in the context of dislocations and spin systems
when the number of defects grows logarithmically as $\e$ goes to
zero (see Theorem \ref{mthmsd2}). 
\par
The $|\log\e|^2$ energetic regime  has been already considered in the vectorial
context of homogenizing edge dislocations in \cite{GLP}, within  a
core {\it radius approach}, under the assumption that the
dislocations have a minimal distance of the order of a suitable
meso-scale. The $\Gamma$-convergence analysis done in \cite{GLP}
provides a macroscopic model for plasticity, in agreement with the phenomenological  strain gradient theory for plasticity introduced in \cite{FH}. Moreover, the limit  energy is compatible with the experimental evidence of the concentration of dislocations on
lines, usually referred to as {\it dislocation walls} (we refer  to \cite{CO} for a variational model describing dislocation patterns in crystals). In view of
\eqref{intro:result}, we extend the $\Gamma$-convergence analysis
done in \cite{GLP} to our completely discrete setting, without any
kinematic assumption on the mutual distance of the screw dislocations. In this way, we derive a strain gradient model for plasticity in the scalar setting of anti-planar
elasticity, starting from a completely discrete and basic model of
screw dislocations.
\par
Finally, let us mention that, as a byproduct of our equivalence
result \eqref{intro:result}, taking into account the
$\Gamma$-convergence result proved in \cite{Po}, we obtain (see
Remark \ref{newproof}) a new proof of compactness and
$\Gamma$-convergence for two-dimensional $GL$ functionals,
independent of the proof given in \cite{JS} and in \cite{S}.
\par
Our method can be clearly exploited  in many interesting
directions: first, one can investigate the equivalence of lower
order terms for the mo\-dels we have discussed so far,  comparing
the so called renormalized energy for vortices and dislocations
within a $\Gamma$-convergence analysis, in the spirit of the
theory of development by $\Gamma$-convergence introduced by Braides and Truvskinovsky in \cite{BrTr}.
Moreover, one can consider the case of edge dislocations,
or, more in general, the case of three dimensional models. Indeed,
the results of this paper provide a first step in the effort of
making a link between material dependent models for dislocations
and phenomenological Ginzburg-Landau approaches. We believe  that
our arguments could give efficient hints to build up material
dependent Ginzburg-Landau energies, taking into account kinematic
constraints and elasticity constants specific of the crystal.
Moreover, exploiting our variational equivalence arguments  in the
three dimensional problem (e.g., in a cubic crystal) would bring new light on interesting
mathematical questions regarding compactness properties and
asymptotic behaviour of generalized Ginzburg-Landau functionals,
the target space being a three-dimensional torus, and the
singularities being rectifiable currents with multiplicity in the
group $\Z^3$ (see Section \ref{comments}).
\par
The paper is organized as follows. In Section \ref{overview} we
introduce the discrete models for spin systems and screw
dislocations, while in  Section \ref{tosi} we introduce the
corresponding topological singularities. In Section \ref{vargu} we
describe our variational argument, that will be used in Section
\ref{mainsec} to prove the variational equivalence between
$GL_\e$, $XY_\e$ and $SD_\e$ models. Such an equivalence will be
specialized in Section \ref{New SD} in order to present new
results in the asymptotic analysis of screw dislocations. In Section \ref{comments}, we will comment the
results achieved in this paper suggesting further extensions, concerning,
for instance, the core radius approach to the singularities. Finally we will propose, in the case of three
dimensional elasticity, a material dependent
Ginzburg-Landau type model for dislocations in a cubic lattice. 

\section{Overview of the models}\label{overview}
In this Section we briefly describe the models of Ginzburg-Landau
vortices, of $XY$-spin systems and of screw dislocations. We will
provide a detailed description of the latter in order to define
the physical quantities involved in the model and needed to
correctly describe the new results in the framework of screw
dislocations contained in Section \ref{New SD}.

For the time being $\Om\subset \R^2$ is a bounded open set with Lipschitz
 boundary, representing the domain of definition of the relevant fields in these models.
For the sake of simplicity, we will also assume  that $\Om$ is
star-shaped with respect to the origin. We stress that with some
minor technical effort in our proofs, such assumption can be
removed.

\subsection{Ginzburg Landau functionals}
Let us introduce the family of the so-called complex
Ginzburg-Landau functionals $GL_\e:W^{1,2}(\Om;\R^2)\to [0,+\infty)$,
 defined as
\begin{equation}\label{GL}
GL_\e(w):=\left(  \int_{\Om} \frac{1}{2} |\nabla w|^2 +
\frac{1}{\e^2} \W(w) \right),
\end{equation}
where $W(x):=(1-|x|)^2$. Here $w$ represents the order parameter
of the model, describing how deep the material is in the
superconductive phase, $\e$ is a length-scale parameter, usually
referred to as the {\it coherence length} while
$GL_\e$ is the corresponding free energy of the system.

\begin{remark}{\rm
We make this explicit choice for $W=(1-|x|)^2$ because it
simplifies some computation. The following standard hypotheses
would suffice to perform our analysis: $W\in C(\R^2)$ such that
$W(x)\geq 0$, $W^{-1}\{0\}=S^1$ and
$$
\liminf\limits_{|x|\to 1} \frac{W(x)}{(1-|x|)^2}>0,\quad\liminf\limits_{|x|\to \infty}\frac{W(x)}{ |x|^2}>0.
$$}
\end{remark}

\subsection{The discrete lattice}
Here we introduce the discrete objects and notations we will use
in the sequel.
\par For every positive $\e>0$,
we set $\oez:= \e\Z^2 \cap \Om$, representing the reference
lattice. We will denote by $\oeu:= \{(i,j) \in \oez\times \oez:\,
|i-j|=\e, \, i\le j \}$ the class of nearest neighbors in $\oez$
(where $i\le j$ means  that $i_l\le j_l$ for $l\in\{1,2\}$). The
class of cells contained in $\Om$ is labelled by the set $\oed:=
\{i \in \oez :\, i+ [0,\e]^2 \subset \Om \}$. Finally, we set
$$
\Om_\e := \left\{ \bigcup_{i\in\oed} i+ [0,\e]^2 \right\}.
$$
In the following, we will extend the use of these notations to any
given open subset $A$ of $\R^2$.

\subsection{XY spin systems}
Here we recall the model of $XY$ spin system following the approach
in \cite{AC} (see also \cite{Simons} for a general introduction to the model). 
First we  introduce the class of admissible fields,
\begin{equation}\label{asev}
\asexy:=\{v:\oez \to \sss^1\},
\end{equation}
where $\sss^1$ denotes the set of unit vectors in $\R^2$. The
family of functionals $XY_\e:\asexy\to\R$ are defined by
\begin{equation}\label{GLd}
XY_\e(v):= \frac{1}{2} \sum_{(i, j) \in \oeu}
 |{v(i)-v(j)}|^2.
\end{equation}
We refer the interested reader to \cite{AC} for the derivation of
these energies by proper scaling of the $XY$ energies written in
the usual form 
$$-\sum_{(i,j)\in\oeu}\e^2 \langle v(i),v(j)
\rangle, 
$$
where $\langle a, b \rangle $ denotes the scalar product
between the vectors $a$ and $b$.

\subsection{Screw dislocations}
Here we introduce a basic discrete model for screw
dislocations, inspired by the  approach  introduced in \cite{ArOr}
and revisited in \cite{Po}. The displacement, in this discrete
anti-planar setting, is a function $u:\ \oez \to \R$. We denote
the class of all admissible displacements by
\begin{equation}\label{aseu}
\asesd:=\{u:\oez \to \R\}.
\end{equation}
We focus here on linearized elasticity, and we  consider the model
case of nearest neighbors interactions, so that the discrete
elastic energy corresponding to any displacement $u$, in absence
of dislocations,  is given by (we fix the shear modulus
$\mu=\frac{1}{2}$)
\begin{equation}\label{discela}
E^{el}_\e(u):= \frac{1}{2} \sum_{(i,j)\in\oeu}
 |u(i)-u(j)|^2.
\end{equation}
It is convenient to introduce also the notion of discrete gradient
$\dd u$, defined on the nearest neighbors (namely, the bonds of
the lattice),  by $\dd u_{i,j} = u(j) - u(i)$, for every
$(i,j)\in\oeu $. With respect to the discrete gradient $\dd u$,
the energy \eqref{discela} reads like
\begin{equation}\label{discelad}
E^{el}_\e(u):= \frac{1}{2} \sum_{(i,j)\in\oeu}
 |\dd u_{i,j}|^2.
\end{equation}

To introduce the dislocations in this framework, we adopt the
point of view of the additive decomposition of the gradient of the
displacement in an elastic part, the strain, and a plastic part,
following the formalism of the discrete pre-existing strains as
in \cite{ArOr} and \cite{Po}. More precisely, a pre-existing strain
is a function $\beta^p$ representing the plastic part of the
strain defined on pairs of nearest neighbors and valued in
$\Z |b|$. Here $b$ represents the so called Burgers vector which is
characteristic of the crystal. In principle $|b|$ should be of
order $\e$, but  up to a further re-scaling in the energy
functionals, we can fix from now on $|b|=1$.
\par
Here the idea  is that the plastic strain $\beta^p$ does not store
elastic energy and hence it has to be subtracted to the gradient
of the displacement in order to obtain the so called elastic
strain. In view of this additive decomposition $\nabla u = \beta^p
+ \beta^e$, we have that the elastic strain $\beta^e$ is not
curl-free (in a suitable discrete sense). In Section \ref{tosi}
we will introduce the quantity $\mu:= \curl \beta^e = - \curl
\beta^p$, that measures the degree of incompatibility of the
elastic strain $\beta^e$ from being a discrete gradient, and
represents the discrete screw dislocations in the crystal lattice.
\par
Summarizing, the elastic energy corresponding to the decomposition
$\nabla u = \beta^p + \beta^e$ is given by
$$
E^{el}_\e(u):= \frac{1}{2} \sum_{(i,j)\in\oeu} |\beta^e_{i,j}|^2=
\frac{1}{2} \sum_{(i,j) \in \oeu} |\dd u_{i,j}- \beta^p_{i,j}|^2,
$$
and whenever the dislocation density $\curl \beta^e = - \curl
\beta^p$ is non zero, then the corresponding  elastic energy is
also non zero.
\par
Given a displacement $u$, we can minimize the corresponding
elastic energy with respect to the plastic strain. Since by our
kinematic assumption the plastic strain takes values in $\Z$, it
is clear that the optimal $\beta_u^p$ is obtained projecting ${\bf d}u$
on $\Z$. More precisely, let $\pr:\R\to \Z$ be the projection
operator defined by
\begin{equation}\label{pr}
\pr(t) = \text{argmin} \{|t-s|, s\in \Z\},
\end{equation}
with the convention that, if the argmin is not unique, then we
choose that with minimal modulus. Then we have $\beta_u^p = \pr \dd
u$, in the sense that, for all $(i,j)\in \Omega_\e^1$,
$(\beta_u^p)_{i,j} = \pr (\dd u_{i,j})$.
 In this way, we obtain the elastic energy functionals
$SD_\e:\asesd\to\R$ defined by
\begin{equation}\label{SDd}
SD_\e(u):=\frac{1}{2} \sum_{(i,j) \in \oeu}
 |(\beta_u^e)_{i,j}|^2 =
\frac{1}{2} \sum_{(i,j) \in \oeu}
 \dist^2\big(u(i)-u(j),\Z\big).
\end{equation}

We notice that, defining $\tilde u:= \e u$ as the physical
displacement, we have
$$
{\e^2 SD_\e(u) }=\frac{1}{2} \sum_{(i,j) \in \oeu}
 \text{dist}^2\big(\tilde u(i)- \tilde u(j),\e \Z\big),
 $$
where the r.h.s. now reads as a term penalizing a misfit slip in
the crystal, according with Pierls-Nabarro theories.
\par
Indeed, another way of understanding \eqref{SDd} is the following.
In our anti-planar setting, the crystal can displace only in the
vertical direction, and each vertical line of atoms has to
displace rigidly. The discrete gradient $\dd \tilde u$ measures
the difference of the displacement of two near lines of atoms. If
$\dd \tilde u$ is of order $\e$, the periodic structure of the
crystal is unperturbed, and therefore the corresponding stored
energy has to vanish. In other words, the nearest neighbors
interactions have to be labeled in the deformed configuration, and
not in the reference one. The rigorous way of formalizing this
idea is given exactly by the projection procedure introduced by
the operator $\pr$ in \eqref{SDd}.

\begin{oss}\label{ident} Here we describe the heuristic argument to identify
the $XY$ and $SD$ models just introduced. The correspondence
between the displacement functions $u_\e \in \asesd$ and the $XY$
fields $v_\e \in \asexy$ is given by identifying $u_\e$ with the
phase function of $v_\e$. More precisely, given a displacement
$u_\e \in \asesd$, the corresponding field $v(u_\e)$ for the $XY$
model is given by
\begin{equation}\label{sdtoxy}
v(u_\e)(l):= e^{2\pi i u_\e (l)} \qquad \text{ for every } l\in \oez.
\end{equation}
Viceversa, given $v_\e \in \asexy$, the corresponding displacement
is  $u(v_\e):= \tfrac{1}{2\pi}\theta_{v_\e}$, where
$\theta_{v_\e}\in [0,2\pi)$ is defined by the identity $v_\e = e^{i
\theta_{v_\e}}$. Note that the (arbitrary) choice of a precise
representative of the phase $\theta_{v_\e}$ of $v_\e$ does not
affect the elastic energy corresponding to $u(v_\e)$. Indeed, we
have
$$
2\pi \, \dist(u_\e(i)-u_\e(j),\Z)=d_{S^1}(v(u_\e)(i)-v(u_\e)(j)),
$$
where $d_{S^1}$ denotes the geodesic distance on $S^1$.
\par
Finally, we notice that by Taylor expansion we have
$2\pi(u_\e(i)-u_\e(j)) \approx v(u_\e)(i)- v(u_\e)(j)$, whenever
$u_\e(i)-u_\e(j)$ is small. Therefore, we expect  the
identification between the fields to produces small perturbations
in the corresponding energy densities (up to a pre-factor $4\pi^2$)  far from the singularities,
and this will be formalized in the sequel.
\end{oss}

\section{The topological singularities and energy functionals}\label{tosi}
As discovered by Jerrard in \cite{jerrard} (see also \cite{JS}), the Jacobian of
the order parameter is the  relevant geometric object carrying
energetic informations for the Ginzburg-Landau functionals. In the
same spirit, in this paper we introduce suitable discrete notions
for the topological singularities of the models we have introduced, and we consider them as the
meaningful variables of the corresponding discrete energy
functionals.
\par
For the $SD$ model, the topological singularity is given by the
discrete dislocation, defined through a discrete version of the
curl of the elastic strain, while for $XY$ spin systems, the
natural notion of topological singularity is that of discrete
vortex, that we will introduce in the sequel.
\par
 We will follow the
formalism introduced in \cite{ArOr} (see also \cite{Po})
representing the singularities by a discrete function $\alpha$
whose values on the cells of the lattice represent the topological
degree of the singularity. Moreover, we will identify this
function $\alpha$ with a measure concentrated on the center of the
cells of the lattice.
\par
As we will see in the sequel, we can always pass from a discrete
representation of the singularities to a continuous one (and
viceversa) by mean of interpolations procedures (and projection on
finite elements space respectively). This will also be
consistent with the topology we are going to use to perform the $\Gamma$-convergence analysis:
the distance, measured in  such a
topology, between the continuous representation of the
singularities and its discrete counterpart will turn out to be
vanishing for sequences of order parameters with bounded energy (see Proposition \ref{cdequi}).
\par

\subsection{The Jacobian}
Given $w\in H^1(\Om;\R^2)$, the Jacobian of $w$ is the $L^1$
function defined by
$$
J w:= \text{det} \nabla v.
$$
Let us denote by $C^{0,1}_c(\Om)$ the space of Lipschitz
continuous functions on $\Om$ with compact support, and by $X$ its
dual. The dual norm of $X$  will be denoted by $\|\cdot\|$.
\par
For every $w\in H^1 (\Om;\R^2)$, we can consider $J w$ as an
element of $X$ by setting
$$
<J w , \f>:= \int_{\Om} J w  \, \f \, dx \qquad \text{ for every }
\f\in C^{0,1}_c({\Om}).
$$
Note that $Jw$ can be written (in the sense of distributions) in a
divergence form as
\begin{equation}\label{div}
J w= \text{div } (w_1 (w_2)_{x_2}, - w_1 (w_2)_{x_1}),
\end{equation}
or equivalently,  in the form $J w = \curl (w_1 \nabla w_2)$. By a
density argument, we deduce that for every $\f \in C^{0,1}_c(\Om)$,
\begin{equation}\label{jwds}
<J w, \f>= - \int_{\R^2} w_1 (w_2)_{x_2} \f_{x_1} - w_1
(w_2)_{x_1} \f_{x_2} \, dx.
\end{equation}
Note that the right hand side of \eqref{jwds} is well defined also when 
$$
w\in
W^{1,1}(\Om;\R^2)\cap L^\infty(\Om;\R^2),
$$
 and for such a
function, we will take \eqref{jwds} as the definition of $Jw$ as
an element of $X$.

Finally, for later use we notice that for every $v:=(v_1, v_2), \,
w:=(w_1, w_2)$ belonging  to $H^{1}(\Om;\R^2)$ (or, as well, to
$W^{1,1}(\Om;\R^2)\cap L^\infty(\Om;\R^2)$), we have
\begin{equation}\label{for}
J v - J w = \frac{1}{2} \big(J (v_1 - w_1, v_2 + w_2) - J
(v_2-w_2,v_1+w_1)\big).
\end{equation}

By \eqref{jwds} and \eqref{for} we immediately deduce the
following lemma.

\begin{lemma}\label{cosu}
Let $v_n$ and $w_n$ be two sequences in $H^1(\Om;\R^2)$ such that
$$\|v_n - w_n\|_2 (\|\nabla v_n\|_2 + \|\nabla w_n\|_2) \to 0.
$$ Then $\|J v_n - J w_n\| \to 0 $ in $X$.
\end{lemma}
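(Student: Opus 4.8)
The plan is to deduce the statement from the two identities already recorded in the excerpt. The key observation is that \eqref{for} writes the \emph{difference} $Jv_n-Jw_n$ as a combination of Jacobians of pairs of scalar functions in which one slot carries the small factor $(v_n)_k-(w_n)_k$ and the other slot carries the bounded-gradient factor $(v_n)_l+(w_n)_l$. Since the distributional expression \eqref{jwds} for a Jacobian is bilinear in its two components, of the form $\langle J(a,b),\f\rangle=-\int_{\Om} a\,b_{x_2}\f_{x_1}-a\,b_{x_1}\f_{x_2}\,dx$, this is exactly the structure needed to convert the hypothesis into a bound on $\|Jv_n-Jw_n\|$.

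First I would establish the elementary bilinear estimate: for scalar functions $a,b\in H^1(\Om)$ one has $\|J(a,b)\|\le \|a\|_2\,\|\nabla b\|_2$. Indeed, testing \eqref{jwds} against $\f$ in the unit ball of $C^{0,1}_c(\Om)$ --- in particular $\|\nabla\f\|_\infty\le 1$ --- and applying the Cauchy--Schwarz inequality gives $|\langle J(a,b),\f\rangle|\le\int_{\Om}|a|\,|\nabla b|\,dx\le\|a\|_2\|\nabla b\|_2$, and one takes the supremum over such $\f$. (Note that for $v_n,w_n\in H^1(\Om;\R^2)$ the pairs appearing in \eqref{for} are again in $H^1(\Om;\R^2)$, so \eqref{jwds} applies verbatim, with no need to invoke the $W^{1,1}\cap L^\infty$ extension.) Then, applying \eqref{for} with $v=v_n$, $w=w_n$ and using the bilinear estimate on each of the two resulting terms --- with $a=(v_n)_k-(w_n)_k$, so that $\|a\|_2\le\|v_n-w_n\|_2$, and $b=(v_n)_l+(w_n)_l$, so that $\|\nabla b\|_2\le\|\nabla v_n\|_2+\|\nabla w_n\|_2$ --- one obtains
$$
\|Jv_n-Jw_n\|\ \le\ \|v_n-w_n\|_2\,\big(\|\nabla v_n\|_2+\|\nabla w_n\|_2\big),
$$
which tends to $0$ by hypothesis.

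There is no serious obstacle here: the whole content is the \emph{quadratic} (rather than merely continuous) dependence of the Jacobian on its argument, which is precisely what \eqref{for} isolates, so that a ``small $\times$ bounded'' product appears rather than a single bounded factor. The only points that need a little care are matching the dual norm of $X$ --- one must use that functionals are tested against $\f$ with $\|\nabla\f\|_\infty\le 1$, which is what feeds the Cauchy--Schwarz step --- and checking that the intermediate pairs are regular enough for \eqref{jwds} to be legitimately applied; both are routine.
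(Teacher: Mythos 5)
Your proof is correct and is precisely the argument the paper intends: the paper itself only says the lemma is "immediately deduced" from \eqref{jwds} and \eqref{for}, and your write-up supplies exactly that deduction (the bilinear estimate $\|J(a,b)\|\le C\|a\|_2\|\nabla b\|_2$ from \eqref{jwds} via Cauchy--Schwarz, applied to the polarization identity \eqref{for}). The only cosmetic point is that the bilinear bound carries a harmless numerical constant from summing the two terms of the integrand, which does not affect the conclusion.
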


\subsection{Discrete dislocations}
Following the formalism introduced in \cite{ArOr} (see also
\cite{Po}), given a  function  $\xi: \oeu\to\R$ (playing the role
of a {\it pre-existing strain}),
we introduce its discrete curl $\dd\xi:\oed\to \R$,
defined for
every $ i\in \oed$ by
\begin{equation}\label{ddd}
\dd\xi (i) := \xi_{i,i + (0,\e)} + \xi_{i+(0,\e), i + (\e,\e)} -
\\
\xi_{i + (\e,0), i+(\e,\e)} -  \xi_{i,i+(\e,0)}. 
\end{equation}

Given an admissible displacement $u:\oez \to \R$, we recall that
we can decompose $\dd u$ in its elastic and plastic part (that is
optimal in energy), by setting $\beta_u^p = \pr \dd u$,
$\beta_u^e=\dd u - \beta^p_u$, where $\pr$ is defined in
\eqref{pr}.
\par We are now in a position to introduce the discrete dislocation
function $\alpha_u:\oed\to \{-1,\, 0,\, 1\}$, defined by
$$
\alpha_u (i):= \dd \beta_u^e (i) \qquad \text{ for every }
i\in\oed.
$$
It will be convenient also to represent $\alpha_u$ as a sum of
Dirac masses with weights in $\{-1,\, 0,\, 1\}$ and  supported on 
the centers of the squares where $\alpha \neq 0$,
setting
\begin{equation}\label{mudiu}
\mu_u:= \sum_{i\in \oed} \alpha_u (i)
\delta_{i+\frac{1}{2}(\e,\e)}.
\end{equation}

\begin{oss}
Notice that, in view of the very definition of $\pr$, we have that
$\beta_u^e\in (-1/2, \, 1/2]$, therefore by definition \eqref{ddd}
the discrete dislocation $\alpha_u$ takes values in $\{-1,\, 0,\,
1\}$. We deduce that, in our model, only singular dislocations can
be present in a cell.
\end{oss}

\subsection{Discrete vortices}
Given an admissible field $v \in\asexy$, the associated discrete
vorticity $\gamma_v:\oed\to \Z$ is defined by
\begin{equation}\label{vortdisc}
\gamma_v:=  \alpha_{\tfrac{1}{2\pi}\theta(v)},
\end{equation}
where $\theta(v)\in [0,2\pi)$ is the phase of $v$ defined by the
relation $v=e^{i \theta(v)}$. Moreover, we introduce the measure
$\mu_v$ defined by
\begin{equation}\label{mudiv}
\mu_v:= \sum_{i\in \oed} \gamma_v (i)
\delta_{i+\frac{1}{2}(\e,\e)}.
\end{equation}
Notice that,  as for the screw dislocations, in a cell we can have
 only  singular vortices.
\begin{oss}\label{delte}
Given $v \in \asexy$,  we can introduce a function $\tilde
v:\Om_\e\to\R^2$ that coincides with $v$ on $\oez\cap \Om_\e$ and
such that $J \tilde v  = \pi \mu_v$. Indeed,  consider
 the function $\tilde \theta(v)$ defined on each
segment $[i,j]$ with  $(i,j)\in\oeu$ by
$$
\tilde \theta(v) (i + s (j-i)) = \theta(v)(i) +  2\pi s \,
(\beta^e_{u(v)})_{i,j} \qquad \text{ for every } s\in[0,1],
$$
where $u(v):=\tfrac{1}{2\pi} \theta (v)$.
 Now,   extend $\tilde \theta(v)$ in each cell $i +
[0,\e]^2$ (with $i\in\oed$), making it zero-homogeneous with
respect to the center of the cell. Finally, on each cell we set
$\tilde v:= e^{i \tilde \theta(v)}$.
\par
A straightforward computation leads to the equality $J \tilde v  =
\pi \mu_v$.  This argument shows that the vorticity function
$\alpha_v$ represents a discrete version of the Jacobian, and
seems a very natural object in this context.
\par
Finally, for latter use, we notice that it can be easily proved
(for instance by the identity $J\tilde v= \curl (\tilde v_1 \nabla
\tilde v_2)$) that for each $\e$-square $Q_i$ with $i\in\oed$ we
have
\begin{equation}\label{sto}
\mu_v (Q_i) = \frac{1}{\pi}\int_{\partial Q_i} \tilde v_1
\frac{\partial}{\partial s} \tilde v_2 \, ds.
\end{equation}
\end{oss}


\begin{oss}\label{estifo}
We notice here that a very easy estimate  leads to the following
bound for the  total variation of the topological singularities
$$
|Jw_\e|(\Om)\le C \, GL_\e(w_\e) , \quad |\mu_{v_\e}|(\Om) \le C
\,  XY_\e(v_\e), \quad |\mu_{u_\e}|(\Om) \le C \, SD_\e(u_\e).
$$
\end{oss}
\subsection{The energy functionals}
In this paragraph we will formally introduce the (rescaled) energy
functionals corresponding to the $XY$ spin system, the screw dislocations
and the Ginzburg-Landau models written in terms of the
corresponding singularities.
\par
In order to rewrite the energy functionals in these new variables,  we minimize the free
energies among all field quantities compatible with the prescribed
singularity. For instance, in the $SD$ model, we will fix
$\mu:=\curl \beta^e$, and   minimize the elastic energy among all
$u$  compatible with $\mu$, {\it i.e.}, with $\mu_u=\mu$. The
corresponding energy functional can be thought of as the energy
stored in the crystal, for a certain
 given dislocation density.
\par
We will consider the energetic regimes of order $|\log\e|^h$, where
$h$ is any fixed positive real number. Following the convention
according to which the infimum of the empty set is $+\infty$, we
define the Ginzburg-Landau energy functional $\GL_\e : X \to [0,+\infty]$,
where $X=(C^{0,1}_c(\Om))^*$, as
\begin{equation}\label{glene}
\GL_\e(\mu):= \frac{1}{|\log \e|^h} \inf \left\{  GL_\e(w), 
w\in H^1({\Om};\R^2): \frac{J(w)}{\pi|\log\e|^{h-1}}
 =\mu \right\}.
\end{equation}

Let us pass to the energy functionals corresponding to $XY$ spin
systems. Using the notations introduced in the Section \ref{tosi},
the energy functionals $\XY_\e: X \to [0,+\infty]$ are defined by
\begin{equation}\label{xyene}
\XY_\e(\mu):= \frac{1}{|\log \e|^h} \inf \left\{ XY_\e(v), \, v\in
\asexy: \frac{\mu_v}{|\log\e|^{h-1}}  =\mu  \right\}.
\end{equation}

Finally  the energy functionals corresponding to the screw
dislocations model $\SD_\e: X \to [0,+\infty]$ are defined by
\begin{equation}\label{sdene}
\SD_\e(\mu):= \frac{4 \pi^2}{ |\log \e|^h} \inf \left\{ SD_\e(u),
\, u\in \asesd: \frac{\mu_u}{|\log\e|^{h-1}}  =\mu  \right\},
\end{equation}
where the prefactor $4\pi^2$ is just a normalization factor which guarantees that the family $(\SD_\e)$
asymptotically behaves as $\XY_\e$ and $\GL_\e$.

\section{The variational equivalence argument}\label{vargu}
In this section we will introduce a notion of equivalence between
families of functionals defined on  a metric space $(X,d)$,
depending on a small parameter $\e$. Such a notion turns out to be efficient to 
compare different variational models which share the same asymptotic behavior as $\e$ goes to zero.

\subsection{The notion of variational equivalence}
Let $(F_\e)$ and $(G_\e)$ be two families of functionals from $X$ to
$\R\cup\{\infty\}$ depending on the  parameter $\e \in \R^+ \cup \{0\}$.

\begin{definition}\label{equivalence}
{\rm

We set $(G_\e) \preceq (F_\e)$ if there exists a continuous
increasing function $\e \mapsto \delta_\e$, with $\delta_0=0$,
such that the following holds.
\par
For every $\e_n\to 0$, and  $(p_n)\subset X$ such that $F_{\en} (p_n)\le C$,
there exists  a family $(q_n) \subset X$ such that
\begin{itemize}
\item[i)] $\limsup_n (G_{\delta_{\en}}(q_n) - F_{\en} (p_n))\le
0$; 
\item[ii)] Either $(p_n)$ and $(q_n)$ are unbounded or
$d(p_n , q_n) \to 0$  as $n \to +\infty$.
\end{itemize}

We set $(F_\e) \simeq (G_\e)$, and we say that  $(F_\e)$ and $(G_\e)$ are variationally equivalent (for $\e \to 0$) if $(F_\e) \preceq (G_\e)$ and $(G_\e) \preceq (F_\e)$.}
\end{definition}

Note that the relation $\preceq$ just introduced is transitive, {\it i.e.}, if $F^1_\e \preceq F^2_\e$ and $F^2_\e\preceq F^3_\e$ then $F^1_\e \preceq F^3_\e$.
Moreover the relation $\simeq$  is  an equivalence relation between families of functionals $\{(F_\e), F_\e:X\to \R\}$, {\it i.e.}, the following properties hold
\begin{align*}
& \text{Reflexivity:} & (F_\e) \simeq (F_\e);\\
& \text{Symmetry:} & (F_\e) \simeq (G_\e)  \text{ implies } (G_\e) \simeq (F_\e);\\
& \text{Transitivity:} & (F_\e) \simeq (G_\e) \text{ and } (G_\e) \simeq (H_\e)
\text{ imply } (G_\e) \simeq (H_\e).
\end{align*}
\subsection{Some consequences of the variational equivalence}

Let us consider now a first important relation between the notion of variational equivalence and that of $\Gamma$-convergence (for the definition and the main properties of $\Gamma$-convergence we refer the reader to \cite{braidesbook} and \cite{dm}). To this purpose we recall the following definition of equi-coercivity: a family of functionals $(F_\e)$ is said to be equi-coercive if, given $\en\to 0$ and $(p_n)$ such that $F_{\en}(p_n)<C$, then $(p_n)$ is relatively compact in $X$.
\begin{theorem}\label{ordiga}
Let  $(F_\e)$ be a family of equi-coercive functionals
 $\Gamma$-converging to some functional $H$ in $X$.
Then $(F_\e)$ is  variationally equivalent to the constant sequence
$H_\e \equiv H$.
\end{theorem}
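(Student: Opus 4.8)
The plan is to verify Definition \ref{equivalence} in both directions: that $(H_\e)\preceq(F_\e)$ and $(F_\e)\preceq(H_\e)$, where $H_\e\equiv H$ is the constant family. In both cases the reparametrization can be taken trivial, $\delta_\e=\e$ (which is continuous, increasing, and vanishes at $0$), so the only content is the comparison of energies along sequences with bounded $F$-energy or bounded $H$-energy. First I would handle the direction $(F_\e)\preceq(H_\e)$. Fix $\e_n\to0$ and $(p_n)$ with $H(p_n)=H_{\e_n}(p_n)\le C$. Since $(F_\e)$ $\Gamma$-converges to $H$, by the limsup inequality there exists a recovery sequence $(q_n)$ with $q_n\to p_n$... but here one must be slightly careful, because the $\Gamma$-limsup inequality gives, for each fixed point $p$, a sequence $q_n\to p$ with $\limsup_n F_{\e_n}(q_n)\le H(p)$; to get a single sequence $(q_n)$ with $q_n-p_n\to0$ and $\limsup_n(F_{\e_n}(q_n)-H(p_n))\le0$ one uses a standard diagonal argument (the $\Gamma$-limsup is attained along the whole parameter family, so one can extract, for each $n$, a good approximant of $p_n$ at scale $\e_n$). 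This yields condition (i), and $d(p_n,q_n)\to0$ gives (ii).

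The reverse direction $(H_\e)\preceq(F_\e)$ is where equi-coercivity enters. Fix $\e_n\to0$ and $(p_n)$ with $F_{\e_n}(p_n)\le C$. By equi-coercivity $(p_n)$ is relatively compact in $X$; I would pass to an arbitrary subsequence and, using compactness plus the $\Gamma$-liminf inequality, show that for the chosen $(q_n):=(p_n)$ itself one has $\limsup_n(H(p_n)-F_{\e_n}(p_n))\le0$, i.e. $H(p_n)\le F_{\e_n}(p_n)+o(1)$. The point is that if this failed there would be a subsequence along which $H(p_{n_k})\ge F_{\e_{n_k}}(p_{n_k})+\eta$ for some $\eta>0$; extracting a further subsequence with $p_{n_k}\to p$ (compactness) and $F_{\e_{n_k}}(p_{n_k})$ convergent, the $\Gamma$-liminf inequality gives $H(p)\le\liminf_k F_{\e_{n_k}}(p_{n_k})$, while lower semicontinuity of $H$ (which holds, being a $\Gamma$-limit) gives $H(p)\le\liminf_k H(p_{n_k})$; combining and chasing the inequalities contradicts the $\eta$-gap. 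Here $q_n=p_n$ trivially satisfies (ii) since $d(p_n,q_n)=0$.

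The only genuine subtlety — the "main obstacle" — is the diagonalization in the first direction: the definition of $\preceq$ demands a \emph{single} sequence $(q_n)$ that simultaneously approximates the \emph{moving} points $p_n$ and beats their energies $H(p_n)$, whereas the raw $\Gamma$-limsup inequality is a statement about recovery sequences for each \emph{fixed} target. The standard fix is to invoke the characterization of $\Gamma$-convergence via the mutual approximability of the family (or equivalently to use that $\Gamma$-convergence of $(F_{\e_n})$ along the fixed sequence $\e_n\to0$ lets one build, for each $n$, a point $q_n$ with $d(p_n,q_n)<1/n$ and $F_{\e_n}(q_n)\le H(p_n)+1/n$, by first choosing a recovery sequence for $p_n$ and then selecting its element with a large enough index); a short lemma to this effect, or a reference to \cite{braidesbook} or \cite{dm}, suffices. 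Once this is in place both verifications of the two conditions in Definition \ref{equivalence} are immediate, and the theorem follows.
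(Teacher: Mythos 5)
There are two genuine gaps, one in each direction; in both cases the missing ingredient is that equi-coercivity must be used to pin down a \emph{single} limit point before the $\Gamma$-convergence inequalities can be applied.

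The more serious gap is in your proof of $(H_\e)\preceq(F_\e)$: the choice $q_n:=p_n$ does not work, and the contradiction you sketch does not close. From $p_{n_k}\to p$ the $\Gamma$-liminf inequality gives $H(p)\le\liminf_k F_{\e_{n_k}}(p_{n_k})$, and lower semicontinuity gives $H(p)\le\liminf_k H(p_{n_k})$; both are upper bounds on $H(p)$ and neither bounds $H(p_{n_k})$ from above, so they are perfectly compatible with $H(p_{n_k})\ge F_{\e_{n_k}}(p_{n_k})+\eta$. In fact the claim $H(p_n)\le F_{\e_n}(p_n)+o(1)$ is false in general: take $X=[-1,1]$ and $F_\e(x)=\min(|x|/\e,1)$, which is equi-coercive and $\Gamma$-converges to the functional $H$ with $H(0)=0$ and $H(x)=1$ for $x\ne 0$; for $p_n=\e_n/2$ one has $F_{\e_n}(p_n)=\tfrac12$ while $H(p_n)=1$. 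The approximant must be allowed to move off $p_n$ (here $q_n=0$ works). The paper's proof does exactly this: it negates the relation in the form $\inf_{x\in B_r(p_n)}H(x)\ge F_{\e_n}(p_n)+r$ for some fixed $r>0$, extracts $p_n\to p$ by equi-coercivity, and then the single inequality $H(p)\le\liminf_n F_{\e_n}(p_n)$ at the limit point, which eventually lies in $B_r(p_n)$, already yields the contradiction.

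The second gap is in the diagonalization for $(F_\e)\preceq(H_\e)$. A recovery sequence for $p_n$ is a family $(z^{(n)}_m)_m$ with $F_{\e_m}(z^{(n)}_m)\to H(p_n)$; ``selecting its element with a large enough index'' $m=m(n)$ controls $F_{\e_{m(n)}}(q_n)$, not $F_{\e_n}(q_n)$, and this index mismatch cannot be absorbed into the reparametrization $\delta_\e$, which must be fixed (as a continuous increasing function of $\e$) before the sequence $(p_n)$ is given --- and you have already committed to $\delta_\e=\e$. The paper resolves this, again, with equi-coercivity: it first deduces coercivity of $H$ from equi-coercivity of $(F_\e)$, extracts a limit $p$ of $(p_n)$, and applies the $\Gamma$-limsup inequality to the single fixed point $p$ along the prescribed sequence $(\e_n)$, together with $H(p)\le\liminf_n H(p_n)$. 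So equi-coercivity enters both directions of the proof, not only the second, and your argument needs to be restructured around the contradiction-plus-compactness scheme in both places.
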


\begin{proof}
We begin
proving $F_\e \preceq H$ by a contradiction argument. Assume by
contradiction that the relation $F_\e \preceq H$ does not hold;
then there exist sequences $\en\to 0$,  $(p_n) \subset X$ with
$H(p_n) \le C$, and $r > 0$ such that
\begin{equation}\label{contr1}
\inf_{x\in B_r(p_n)} F_{\e_n}(x) \ge H(p_n) + r.
\end{equation}
On the other hand, by the equi-coercivity of $F_\e$ we easily
deduce the coercivity of $H$, so that we can assume without loss
of generality that $p_n$ converges to some $p\in X$. By the
$\Gamma$-limsup inequality there exists a sequence $(z_n)\subset
X$ such that
\begin{equation}\label{contr2}
z_n \to p, \qquad F_{\e_{n}}(z_n) \to H(p) \le \liminf H(p_n).
\end{equation}
Equation \eqref{contr1} together with \eqref{contr2} provides a
contradiction.
\par
Let us pass to the proof of the opposite inequality $H\preceq F_\e$.
Assume by contradiction that there exist  sequences $\en \to 0$,
$(p_n)\subset X$ with $F_{\en}(p_n)\le C$ and $r>0$ such that

\begin{equation}\label{contr3}
\inf_{x\in B_r(p_n)} H(x) \ge F_{\en}(p_n) + r.
\end{equation}

By the equi-coercivity property of $F_\e$ we may assume that $p_n
\to p$ in X, and by  the $\Gamma$-liminf inequality we have that,
for n big enough
$$\inf_{x\in B_r(p_n)} H(x) \le H(p) \le \liminf F_{\en}(p_n),$$
 which, together with \eqref{contr3} provides a contradiction.
\end{proof}

\begin{oss}

Note that in Theorem \ref{ordiga} the equicoercivity assumption
plays a fundamental role. Indeed consider the space $l_2$ of
sequences $\{a_i\}$ with $\sum_i a_i^2$ finite, and consider the
functionals $F_{\e}:\R\to \R$ defined by
$$
F_\e(\{a_i\}):= \sum_{i\neq[1/\e]} a_i^2,
$$
where $[1/\e]$ denotes the integer part of $1/\e$. Clearly  $F_\e$
$\Gamma$-converge to the functional $H(\{a_i\}):=\|\{a_i\}\|_2^2$,
but $F_\e$ is not variationally equivalent to $H$.
\end{oss}

The next theorem, together with Theorem \ref{ordiga}, clarifies the
relation between the notion of $\Gamma$-convergence and the notion
of variational equivalence.
\begin{theorem}\label{consequi}
Let $(F_\e)$ and $(G_\e)$ be variationally equivalent. Then the
following properties hold.
\begin{itemize}
\item[1)]
$F_\e$ are equi-coercive if and only if $G_\e$ are equi-coercive;
\item[2)]
$F_\e$ $\Gamma$-converge to some functional $H$ in $X$ if and only
if $G_\e$ $\Gamma$-converge to $H$.
\end{itemize}
\end{theorem}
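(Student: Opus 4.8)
The plan is to reduce statement 2) essentially to Theorem \ref{ordiga} together with the transitivity of the relation $\simeq$, and to prove statement 1) directly from the definition of $\preceq$. By symmetry of $\simeq$ it suffices to prove each ``only if'' direction. First I would prove 1): assume $(F_\e)$ is equi-coercive and let $\en \to 0$, $(q_n)\subset X$ with $G_{\en}(q_n)\le C$. Since $(G_\e)\preceq (F_\e)$ there is a reparametrization $\delta_\e$ and a family $(p_n)$ with $\limsup_n (F_{\delta_{\en}}(p_n) - G_{\en}(q_n))\le 0$ and, unless both families are unbounded, $d(p_n,q_n)\to 0$. The case where $(p_n)$ and $(q_n)$ are unbounded cannot occur here because $G_{\en}(q_n)\le C$ forces, via the other inequality $(F_\e)\preceq (G_\e)$ applied in reverse — actually more simply: from $F_{\delta_{\en}}(p_n)\le C+1$ for $n$ large and equi-coercivity of $(F_\e)$ (note $\delta_{\en}\to 0$ since $\delta$ is continuous with $\delta_0=0$), $(p_n)$ is relatively compact, hence bounded; so the alternative ``both unbounded'' is excluded and $d(p_n,q_n)\to 0$. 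Then $(q_n)$ lies in a vanishing neighborhood of the relatively compact set $\{p_n\}$, hence is itself relatively compact. Thus $(G_\e)$ is equi-coercive.

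For statement 2), suppose $(F_\e)$ is equi-coercive and $\Gamma$-converges to $H$. By part 1) just proved, $(G_\e)$ is then equi-coercive as well, and by Theorem \ref{consequi}.1 together with the hypothesis it is legitimate to invoke equi-coercivity of both families throughout. Now apply Theorem \ref{ordiga} to $(F_\e)$: it gives $(F_\e)\simeq (H_\e)$ where $H_\e\equiv H$. Combining with the hypothesis $(F_\e)\simeq (G_\e)$ and using symmetry and transitivity of $\simeq$, we get $(G_\e)\simeq (H_\e)$, i.e.\ $(G_\e)$ is variationally equivalent to the constant family $H$. It remains to extract from this, plus equi-coercivity of $(G_\e)$, that $(G_\e)$ actually $\Gamma$-converges to $H$. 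This is the one genuinely new piece of argument: I would prove directly a ``converse'' to Theorem \ref{ordiga}, namely that if an equi-coercive family $(G_\e)$ satisfies $(G_\e)\simeq (H_\e)$ with $H_\e\equiv H$, then $G_\e \stackrel{\Gamma}{\to} H$. For the $\Gamma$-liminf inequality: given $q_n\to q$, if $\liminf_n G_{\en}(q_n)=+\infty$ there is nothing to prove; otherwise pass to a subsequence realizing the liminf with $G_{\en}(q_n)\le C$, use $(G_\e)\preceq (H_\e)$ to produce $(p_n)$ with $H(p_n)=H_{\delta_{\en}}(p_n)\le G_{\en}(q_n)+o(1)$ and $d(p_n,q_n)\to 0$ (the unbounded alternative being excluded as in part 1), so $p_n\to q$; then lower semicontinuity of $H$ — which follows from $H$ being a $\Gamma$-limit, or can be obtained from the equivalence itself — gives $H(q)\le \liminf_n H(p_n)\le \liminf_n G_{\en}(q_n)$. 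For the $\Gamma$-limsup (recovery sequence): given $q\in X$, apply $(H_\e)\preceq (G_\e)$ to the constant sequence $p_n\equiv q$ (so $H_{\en}(q)=H(q)\le C$ if $H(q)<\infty$; the case $H(q)=+\infty$ is trivial) to obtain $q_n$ with $d(q_n,q)\to 0$ and $\limsup_n G_{\sigma_{\en}}(q_n)\le H(q)$ for the relevant reparametrization $\sigma$; reindexing $\e\mapsto \sigma_\e$ (a continuous increasing bijection near $0$, up to the usual harmless adjustment) turns this into a recovery sequence at the original scale.

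The main obstacle — and the step deserving the most care — is precisely this reindexing/reparametrization bookkeeping: the relation $\preceq$ only compares $F_\e$ with $G_{\delta_\e}$ at a \emph{shifted} scale, whereas $\Gamma$-convergence is a statement along the given scale $\e_n\to 0$. One has to check that replacing $\en$ by $\delta_{\en}$ (or its inverse) is harmless because $\delta$ is continuous, increasing, and fixes $0$, so $\delta_{\en}\to 0$ and, conversely, \emph{every} sequence tending to $0$ is of the form $\delta_{\en}$ for some $\en\to 0$; this is what lets us feed an arbitrary test sequence into the definition. A secondary nuisance is the ``both unbounded'' alternative in clause (ii) of Definition \ref{equivalence}, which must be ruled out at each invocation using the energy bound plus equi-coercivity, as illustrated by the $l_2$ counterexample in the remark after Theorem \ref{ordiga} — this is exactly why the equi-coercivity hypothesis is not removable. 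Once these two points are handled, everything else is a routine diagonal/subsequence argument combined with lower semicontinuity of $\Gamma$-limits.
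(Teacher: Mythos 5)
Part 1) of your plan is correct and essentially coincides with the paper's argument (modulo a harmless mix-up about which of the two relations $(F_\e)\preceq(G_\e)$ and $(G_\e)\preceq(F_\e)$ you are invoking; both hold by hypothesis, so nothing is lost). The genuine gap is in part 2). You open with ``suppose $(F_\e)$ is equi-coercive and $\Gamma$-converges to $H$'', but equi-coercivity is \emph{not} a hypothesis of statement 2): the theorem asserts that variationally equivalent families share their $\Gamma$-limits unconditionally, and the remark immediately following the theorem in the paper stresses precisely that this holds ``even when they are not equi-coercive''. Your route is structurally dependent on that extra hypothesis: you pass through Theorem \ref{ordiga} to obtain $(F_\e)\simeq(H_\e)$ and then use transitivity, and Theorem \ref{ordiga} genuinely requires equi-coercivity (the $l_2$ example in the paper shows it fails without it); your subsequent ``converse to Theorem \ref{ordiga}'' also leans on coercivity to rule out the unbounded alternative. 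So as written you prove a strictly weaker statement than the one claimed.

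The fix is to drop the detour through the constant family $(H_\e)$ and argue directly on recovery and test sequences, which is what the paper does. For the $\Gamma$-limsup at a point $p$ with $H(p)<\infty$: choose $\rho_n$ with $\delta_{\rho_n}=\en$ (possible since $\delta$ is continuous, increasing, with $\delta_0=0$), take a recovery sequence $(p_n)$ for $F_{\rho_n}$ at $p$, and apply $(G_\e)\preceq(F_\e)$ to it; the resulting $(q_n)$ satisfies $\limsup_n G_{\en}(q_n)\le H(p)$ and converges to $p$, the ``both unbounded'' alternative being excluded simply because $p_n\to p$ is bounded --- no coercivity enters. For the $\Gamma$-liminf: given $p_n\to p$ with $G_{\en}(p_n)\le C$ along a subsequence realizing the liminf, apply $(F_\e)\preceq(G_\e)$ to produce $q_n\to p$ with $F_{\delta_{\en}}(q_n)\le G_{\en}(p_n)+o(1)$, and conclude with the $\Gamma$-liminf inequality for $F$ along the scale $\delta_{\en}\to 0$. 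Your observations about reindexing via the continuity and monotonicity of $\delta$ are exactly the right bookkeeping; the only thing that has to go is the reliance on equi-coercivity in part 2).
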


\begin{proof}
To prove property $1)$ assume for instance that $G_\e$ are
equi-coercive and let us prove that so are also $F_\e$. To this
aim let $\e\mapsto \delta_\e$ be the map given in the definition of $G_\e\preceq F_\e$. 
Let $\en\to 0$ and $(p_n)\subset X$ be such that $F_{\en}(p_n)\le C$. By Definition
\ref{equivalence} there exists a sequence $(q_n)\subset X$ such
that $\limsup_n G_{\delta_{\en}} (q_n) \le C$. By the equi-coercivity
property of $G_\e$ we deduce that (up to a subsequence) $q_n \to
x$ for some $x\in X$. By property $ii)$ of Definition
\ref{equivalence} we deduce that $d(p_n , q_n)\to 0$ and
therefore that  $p_n\to x$.
\par
Let us pass to property 2). Assume for instance that $F_\e$
$\Gamma$-converge to $H$ and let us prove that also $G_\e$
$\Gamma$-converge to  $H$.
To this purpose, let $\en\to 0$ and let $\rho_n$ be such that 
$\en=\delta_{\rho_n}$. In order to prove the $\Gamma$-limsup inequality let $p\in X$,
and let $(p_n)$ be a recovery sequence for $F_{\rho_n}$; {\it i.e.}, $p_n\to p$ and $F_{\rho_n}(p_n)\to F(p)$. Consider the
sequences $(q_n)\subset X$ given by Definition
\ref{equivalence}. By property ii) of Definition \ref{equivalence}
we have that $q_n\to p$, and by property i) we have that
$$
\limsup G_{\en} (q_n) \le \lim_n F_{\rho_n}(p_n) = H(p).
$$
In order to prove the $\Gamma$-liminf inequality, let now $\e\to\de$ be the map 
given in the definition of $F_\e\preceq G_\e$ and let $p _n \to p$. 
By Definition \ref{equivalence} there exists a sequence
$q_n\to p$ such that
$$
H(p)\le \liminf_n F_{\delta_{\en}}(q_n) \le \liminf_n G_{\en}(
p_n),
$$
that proves the $\Gamma$-liminf inequality.
\end{proof}

\begin{oss}
Note that if $F_\e$ and $G_\e$ are variationally equivalent, by
Theorem \ref{consequi} we deduce that they share the same
$\Gamma$-limits even when they are not  equi-coercive.
\par
We also notice that in the class of equi-coercive functionals
admitting a $\Gamma$-limit, our definition of equivalence
coincides with that given (at order zero) in \cite{BrTr}.
\end{oss}

\begin{oss}
Even if Definition \ref{equivalence} is given in a metric context,
it can be generalized to vectorial topological spaces.  We
chose the metric framework because it is usually general enough
for practical applications, the metric $d$ being either the metric
inducing the weak topology of a ball in a Banach space, or (in
view of a compact embedding) the distance induced by the norm of a
Banach space.
\end{oss}

\subsection{A first example of equivalent families}

For all positive $s>0$ consider the Ginzburg-Landau functionals $GL^s_\e:W^{1,2}(\Om;\R^2)\to [0,+\infty)$,
 defined as

\begin{equation}\label{GLs}
GL^s_\e(w):= \frac{1}{2} \int_{{\R^2}} |\nabla w|^2 + \frac{s}{\e^2} \W(w).
\end{equation}
Note that for $s=1$ the functionals $GL^s_\e$ coincides with the functional $GL_\e$ defined in \eqref{GL}.
Consider also the functional $\GL_\e^s:X\to [0,+\infty]$ defined as in \eqref{glene} with $GL_\e$ replaced by $GL^s_\e$.

\begin{proposition}\label{s1s2}
For every $0<s_1<s_2$, the families of functionals $\GL_\e^{s_1}$ and $\GL_\e^{s_2}$ are variationally equivalent, according to Definition \ref{equivalence}.
\end{proposition}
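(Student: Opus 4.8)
The plan is to reduce the statement to the elementary scaling invariance of the Ginzburg--Landau density under a simultaneous rescaling of the length parameter $\e$ and of the weight in front of the potential. Setting $\delta_\e:=\e\sqrt{s_2/s_1}$ one has $s_2/\delta_\e^2=s_1/\e^2$, whence the exact identity
$$GL^{s_2}_{\delta_\e}(w)=GL^{s_1}_\e(w)\qquad\text{for every admissible }w ,$$
while $J(w)$ is of course unaffected. The map $\e\mapsto\delta_\e$ is continuous, strictly increasing and vanishes at $0$, hence it is an admissible reparametrization in Definition \ref{equivalence}. The only real discrepancy between the normalized functionals $\GL^{s_1}_\e$ and $\GL^{s_2}_{\delta_\e}$ then comes from the normalizing powers of $|\log\e|$ and $|\log\delta_\e|$; but $|\log\delta_\e|=|\log\e|-\tfrac12\log(s_2/s_1)$ for $\e$ small, so each of the ratios $|\log\e|/|\log\delta_\e|$, $|\log\e|^{h}/|\log\delta_\e|^{h}$ and $|\log\e|^{h-1}/|\log\delta_\e|^{h-1}$ tends to $1$, which is what makes the two families asymptotically indistinguishable.

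To prove $\GL^{s_2}_\e\preceq\GL^{s_1}_\e$, fix $\en\to0$ and $\mu_n$ with $\GL^{s_1}_{\en}(\mu_n)\le C$. Since this quantity is finite, pick $w_n\in H^1(\Om;\R^2)$ with $J(w_n)=\pi|\log\en|^{h-1}\mu_n$ and $GL^{s_1}_{\en}(w_n)\le|\log\en|^h\big(\GL^{s_1}_{\en}(\mu_n)+\tfrac1n\big)$. Put $c_n:=|\log\en|^{h-1}/|\log\delta_{\en}|^{h-1}$, so that $c_n\to1$, and set $q_n:=c_n\mu_n$. Then $J(w_n)=\pi|\log\delta_{\en}|^{h-1}q_n$, i.e. $w_n$ is admissible for $\GL^{s_2}_{\delta_{\en}}(q_n)$, and using $GL^{s_2}_{\delta_{\en}}(w_n)=GL^{s_1}_{\en}(w_n)$,
$$\GL^{s_2}_{\delta_{\en}}(q_n)\le\frac{GL^{s_1}_{\en}(w_n)}{|\log\delta_{\en}|^h}\le\Big(\frac{|\log\en|}{|\log\delta_{\en}|}\Big)^{h}\Big(\GL^{s_1}_{\en}(\mu_n)+\tfrac1n\Big).$$
As $(|\log\en|/|\log\delta_{\en}|)^h\to1$ and $0\le\GL^{s_1}_{\en}(\mu_n)\le C$, this gives $\limsup_n\big(\GL^{s_2}_{\delta_{\en}}(q_n)-\GL^{s_1}_{\en}(\mu_n)\big)\le0$, i.e. property i) of Definition \ref{equivalence}.

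For property ii) I would split into two cases. If $(\mu_n)$ is bounded in $X$, then $\|q_n-\mu_n\|=|c_n-1|\,\|\mu_n\|\to0$, and since norm convergence implies convergence in $d$ this yields $d(q_n,\mu_n)\to0$. If instead $\sup_n\|\mu_n\|=+\infty$, then from $\|q_n\|=c_n\|\mu_n\|\ge\tfrac12\|\mu_n\|$ for $n$ large we get $\sup_n\|q_n\|=+\infty$ as well, so both $(p_n)=(\mu_n)$ and $(q_n)$ are unbounded. In either case ii) holds, so $\GL^{s_2}_\e\preceq\GL^{s_1}_\e$. The reverse relation $\GL^{s_1}_\e\preceq\GL^{s_2}_\e$ is proved verbatim, this time with the reparametrization $\delta'_\e:=\e\sqrt{s_1/s_2}$, which is again continuous, increasing and vanishing at $0$ and satisfies $GL^{s_1}_{\delta'_\e}\equiv GL^{s_2}_\e$. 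Hence $\GL^{s_1}_\e\simeq\GL^{s_2}_\e$.

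I do not expect a substantive obstacle here: in contrast with the equivalences between the three physical models $\GL_\e$, $\XY_\e$, $\SD_\e$ (which will require the interpolation procedures and the geometric-measure-theoretic control of the singularities announced in the introduction), the present argument is an exact change of variables, so the energy bookkeeping is error-free thanks to the identity $GL^{s_2}_{\delta_\e}\equiv GL^{s_1}_\e$. The only point that is not completely automatic is property ii) for the rescaled singularities $q_n=c_n\mu_n$, and this is dealt with above by observing that a multiplicative error $c_n\to1$ vanishes in the metric on bounded sequences and preserves unboundedness otherwise.
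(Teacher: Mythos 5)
Your proof is correct and follows essentially the same route as the paper: the rescaling $\delta_\e=\e\sqrt{s_2/s_1}$ giving the exact identity $GL^{s_2}_{\delta_\e}\equiv GL^{s_1}_\e$, the correction factor $c_n\to 1$ on the singularity variable, and the observation that the mismatch of $|\log\e|$ versus $|\log\delta_\e|$ is asymptotically negligible. The only (immaterial) difference is that the paper dispatches the relation $\GL^{s_1}_\e\preceq\GL^{s_2}_\e$ directly from the pointwise monotonicity $\GL^{s_1}_\e\le\GL^{s_2}_\e$, whereas you rerun the rescaling argument symmetrically.
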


\begin{proof}
Let $s_1< s_2$. Since $ \GL_\e^{s_1} \le \GL_\e^{s_2}$, we immediately deduce
$ \GL_\e^{s_1} \preceq \GL_\e^{s_2}$. In order to prove the opposite relation, consider the following change of variables
$$
\e {\mapsto} \de:=   \e \left(\frac{s_2}{ s_1}\right)^{\frac{1}{2}}.
$$
Following Definition \ref{equivalence}, let $\e_n\to 0$ as $n\to +\infty$ and let  $(p_n)\subset X$ be
such that $\GL_{\e_n}^{s_1}(p_n)\le C$. Moreover,  set
$$t_n:=
\tfrac{|\log {\delta_{\e_n}}|^{h-1}}{|\log \e_n|^{h-1}}, \qquad  q_n:=
\frac{p_n}{t_n}. $$
 By the fact that $t_n\to 1$ as $n\to +\infty$ it
immediately follows that either $p_n$ and $q_n$ are both
unbounded or $d(p_n,q_n) \to 0$.
\par
 Let   $(w_n)\subset H^1({\R^2};\R^2)$ be such that
 $$
 \frac{1}{|\log\e_n|^{h-1}} J w_n= p_n, \qquad \tfrac{1}{|\log \e_n|^h} GL^{s_1}_{\e_n}(w_n) - \GL^{s_1}_{\e_n}(p_n) \to 0  \text{ as } n\to +\infty.
 $$
 Then, since by construction
$\tfrac{1}{|\log {\delta_{\e_n}}|^{h-1}} J w_n=q_n$,
 we have
\begin{multline*}
\limsup_n (\GL_{\delta_{\e_n}}^{s_2}(q_n) - \GL_{\e_n}^{s_1}(p_n))\\ \le
\limsup_{n} \left(\frac{1}{|\log{\delta_{\e_n}}|^h} GL_{\delta_{\e_n}}^{s_2}(w_n)
- \frac{1}{|\log{\e_n}|^h} GL_{\e_n}^{s_1}(w_n)\right)\\
= \limsup_{n} \Big( \frac{1}{|\log \left(\e_n \frac{s_2}{s_1}
\right)^{\frac{1}{2}}|^h} - \frac{1}{|\log \e_n|^h}\Big) \left(
\frac{1}{2} \int_{{\R^2}} |\nabla w_n|^2 + \frac{s_1}{\e_n^2}
\W(w_n) \right)=0.
\end{multline*}
\end{proof}

In the following we will need a slight variant of Proposition \ref{s1s2}, where the pre-factor $s$ in front of the potential term $W$ is a suitable function of $\e$.

\begin{proposition} \label{se}
Let $\e \mapsto s_\e$ be an increasing functions from $\R^+$ to $\R^+$ 
such that  $\tfrac{|\log s_\e|}{|\log\e|}\to 0$ as $\e \to 0$.
Then the functionals $\GL_\e$ are variationally equivalent to
$\GL^{s_\e}_\e$, according to Definition \ref{equivalence}.
\end{proposition}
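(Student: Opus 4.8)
This is a variant of Proposition~\ref{s1s2}, and I would prove it along the same lines: a change of variables $\e\mapsto\delta_\e$ together with a rescaling $q_n=t_np_n$ of the order parameter by a factor $t_n\to1$, the only point being that the two logarithmic normalisations entering \eqref{glene} (the prefactor $|\log\e|^h$ and the constraint $Jw/(\pi|\log\e|^{h-1})$) agree up to $o(1)$ after the change of variables. Since $s_\e$ is increasing, $s_\infty:=\lim_{\e\to0^+}s_\e$ exists in $[0,+\infty)$, and $s_\e\in[s_\infty,s_{\e_0}]$ for $\e\le\e_0$; I would distinguish the cases $s_\infty>0$ and $s_\infty=0$.

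If $s_\infty>0$, the assumption on $|\log s_\e|$ is not needed: for $\e$ small $s_\e$ lies in a compact subinterval $[m,M]\subset(0,+\infty)$, and from the monotonicity $GL^s_\e\le GL^{s'}_\e$ for $s\le s'$ one gets $\GL^m_\e\le\GL^{s_\e}_\e\le\GL^M_\e$ on $X$ for $\e$ small. Pointwise domination gives $\GL^{s_\e}_\e\preceq\GL^M_\e$ and $\GL^m_\e\preceq\GL^{s_\e}_\e$ (take $\delta_\e=\e$, $q_n=p_n$), while Proposition~\ref{s1s2} gives $\GL^m_\e\simeq\GL^1_\e\simeq\GL^M_\e$; transitivity of $\preceq$ then yields $\GL_\e=\GL^1_\e\preceq\GL^{s_\e}_\e$ and $\GL^{s_\e}_\e\preceq\GL^1_\e=\GL_\e$, i.e. $\GL_\e\simeq\GL^{s_\e}_\e$. (The boundary situations $m\ge1$ or $M\le1$, where one of the two applications of Proposition~\ref{s1s2} is replaced by a trivial pointwise bound, are handled in the same way.)

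If $s_\infty=0$, then $s_\e\le1$ for $\e$ small, so $\GL^{s_\e}_\e\le\GL^1_\e=\GL_\e$ pointwise, whence $\GL^{s_\e}_\e\preceq\GL_\e$ for free; the only real content is the reverse relation. For $\GL_\e\preceq\GL^{s_\e}_\e$ I would use the reparametrisation $\delta_\e:=\e/\sqrt{s_\e}$, chosen precisely so that $GL_{\delta_\e}(w)=GL^1_{\delta_\e}(w)=GL^{s_\e}_\e(w)$ for every $w$ (the gradient terms agree and $1/\delta_\e^2=s_\e/\e^2$). The assumption $|\log s_\e|/|\log\e|\to0$ gives $\log\delta_\e=\log\e-\tfrac12\log s_\e=(1+o(1))\log\e$, hence $t_n:=|\log\e_n|^{h-1}/|\log\delta_{\e_n}|^{h-1}\to1$ and $|\log\e_n|^h/|\log\delta_{\e_n}|^h\to1$. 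Given $\e_n\to0$ and $(p_n)$ with $\GL^{s_{\e_n}}_{\e_n}(p_n)\le C$, pick $w_n$ with $Jw_n/(\pi|\log\e_n|^{h-1})=p_n$ and $GL^{s_{\e_n}}_{\e_n}(w_n)/|\log\e_n|^h\le\GL^{s_{\e_n}}_{\e_n}(p_n)+\tfrac1n$, and set $q_n:=t_np_n=Jw_n/(\pi|\log\delta_{\e_n}|^{h-1})$; since $w_n$ is then admissible in the infimum defining $\GL_{\delta_{\e_n}}(q_n)$ and $GL_{\delta_{\e_n}}(w_n)=GL^{s_{\e_n}}_{\e_n}(w_n)$,
\[
\GL_{\delta_{\e_n}}(q_n)-\GL^{s_{\e_n}}_{\e_n}(p_n)\le GL^{s_{\e_n}}_{\e_n}(w_n)\Big(\frac{1}{|\log\delta_{\e_n}|^h}-\frac{1}{|\log\e_n|^h}\Big)+\frac1n .
\]
Using $GL^{s_{\e_n}}_{\e_n}(w_n)\le(C+1)|\log\e_n|^h$ and $\tfrac{1}{|\log\delta_{\e_n}|^h}-\tfrac{1}{|\log\e_n|^h}=o(1)\,|\log\e_n|^{-h}$, the right-hand side is $o(1)$, which is property i) of Definition~\ref{equivalence}; property ii) ($d(p_n,q_n)\to0$ or both unbounded) follows from $t_n\to1$ exactly as in Proposition~\ref{s1s2}.

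The conceptual crux — and the only place the hypothesis $|\log s_\e|=o(|\log\e|)$ is genuinely used — is this last relation $\GL_\e\preceq\GL^{s_\e}_\e$ in the regime $s_\e\to0$. The only other, mild, point to check is that $\e\mapsto\delta_\e=\e/\sqrt{s_\e}$ be an admissible reparametrisation: continuous, increasing, with $\delta_0=0$. Since $|\log s_\e|=o(|\log\e|)$ forces $s_\e$ to vanish more slowly than any positive power of $\e$, one has $\delta_\e\to0$ and $\delta_\e$ comparable to $\e$ near $0$; should $s_\e$ fail to be continuous or $\delta_\e$ fail to be monotone near $0$, one replaces $s_\e$ by a continuous increasing minorant $\tilde s_\e\le s_\e$ still satisfying $|\log\tilde s_\e|/|\log\e|\to0$ and with $\e/\sqrt{\tilde s_\e}$ increasing, and argues with $\tilde s_\e$: then $GL^{\tilde s_\e}_\e\le GL^{s_\e}_\e$ pointwise, which is precisely the direction needed for the displayed estimate, and everything goes through unchanged. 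I expect this bookkeeping, rather than any estimate, to be the only slightly delicate part; in the spirit of the remark preceding the statement, the proof really is a ``slight variant'' of that of Proposition~\ref{s1s2}.
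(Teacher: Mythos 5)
Your proof is correct and is essentially the paper's own argument: the paper's entire proof of Proposition~\ref{se} is the one-line instruction to repeat the proof of Proposition~\ref{s1s2} with the reparametrisation $\delta_\e=\e/\sqrt{s_\e}$, which is exactly the core of your case $s_\infty=0$ (the identity $GL_{\delta_\e}=GL^{s_\e}_\e$ plus $\log\delta_\e=(1+o(1))\log\e$ from the hypothesis $|\log s_\e|=o(|\log\e|)$). The additional case analysis for $s_\infty>0$ and the monotonicity bookkeeping for $\delta_\e$ are details the paper leaves implicit, and you handle them correctly.
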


\begin{proof}
The proof follows the lines of the proof of  Proposition \ref{s1s2},
setting now $\de:= \tfrac{\e}{\sqrt{s_\e}}$.
\end{proof}

\section{Variational equivalence between $\GL_\e$, $\XY_\e$ and
$\SD_\e$}\label{mainsec}
In this Section we prove the main result of
the paper: the energies corresponding to $XY$
spin systems,  to screw dislocations and to the Ginzburg-Landau
model are variationally equivalent in the sense of
Definition \ref{equivalence}. We prove this result for any scaling
of the energies of order $|\log\e|^h$ with $h\ge 1$, the most relevant cases being  $h=1$ and $h=2$.
\par
In order to define all the energy functionals in the same space,
we consider the Banach space $(X,\|\cdot\|)$ defined as the dual
of Lipschitz functions with compact support in $\Om$,  endowed
with the dual norm. This space is the natural space for the
topological singularities, that are the relevant quantities in all
the investigated models. The functionals $\GL_\e$, $\XY_\e$ and
$\SD_\e$ from $X$ to $[0,+\infty]$ are defined rigorously in
\eqref{glene}, \eqref{xyene} and \eqref{sdene} respectively.

\begin{theorem}\label{mainthm}
The functionals $\XY_\e$, $\SD_\e$ and $\GL_\e$ are variationally
equivalent according to Definition \ref{equivalence}.
\end{theorem}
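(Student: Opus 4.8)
The strategy is to use that $\simeq$ is an equivalence relation and $\preceq$ is transitive (as observed right after Definition \ref{equivalence}), so that it suffices to prove the two equivalences $\XY_\e\simeq\SD_\e$ and $\XY_\e\simeq\GL_\e$; the remaining one, $\GL_\e\simeq\SD_\e$, then follows by transitivity. For each of the four $\preceq$ inequalities I will exhibit the reparametrization $\e\mapsto\delta_\e$ of the length scale and, given $(p_n)=(\mu_n)$ with the relevant functional bounded by $C$, the competitor family $(q_n)$ demanded by Definition \ref{equivalence}.

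The equivalence $\XY_\e\simeq\SD_\e$ rests on the phase identification of Remark \ref{ident} and requires no genuine change of scale. Since for every admissible displacement $u$ the spin field $v(u)=e^{2\pi i u}$ satisfies $\mu_{v(u)}=\mu_u$ (the projection $\pr$ is compatible with integer shifts of $u$, so $\beta^e$ and hence $\alpha$ are unchanged) and, by the chord-versus-arc bound $|v(u)(i)-v(u)(j)|\le d_{S^1}(v(u)(i),v(u)(j))=2\pi\,\dist(u(i)-u(j),\Z)$, one has $XY_\e(v(u))\le 4\pi^2 SD_\e(u)$, taking infima gives $\XY_\e\le\SD_\e$ pointwise; hence $\XY_\e\preceq\SD_\e$ with $\delta_\e=\e$, $q_n=p_n$. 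For the opposite inequality one uses the complementary estimate $2\pi\,\dist(t,\Z)\le\tfrac\pi2|e^{2\pi i t}-e^{2\pi i s}|$ (concavity of $\sin$ on $[0,\tfrac\pi2]$): its constant $\tfrac\pi2>1$ is attained only when neighbouring spins are almost antipodal, and on a near-minimizer $v_n$ of $\XY_{\en}(\mu_n)$ — after a standard ball-construction step replacing each of the $O(|\log\en|^{h-1})$ cores by a canonical single-vortex profile, at cost $O(1)$ per core and without moving the vortices in $\|\cdot\|$ — such bonds are confined to bounded neighbourhoods of the cores, so chord and arc agree up to $1+o(1)$ elsewhere. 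Thus $4\pi^2 SD_{\en}(u(v_n))\le XY_{\en}(v_n)+O(|\log\en|^{h-1})=XY_{\en}(v_n)+o(|\log\en|^h)$, which after division by $|\log\en|^h$ is property (i), while (ii) is trivial since $q_n=p_n$.

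The equivalence $\XY_\e\simeq\GL_\e$ is the technical core and is where the meso-scale $\delta_\e$ really enters. For $\GL_\e\preceq\XY_\e$ I take a near-minimizer $v_n$ of $\XY_{\en}(\mu_n)$, regularized near its cores as above, and build an $H^1$ competitor $w_n$ that agrees with $v_n$ at the lattice points away from the cores, interpolates it in a Sobolev fashion on the bonds and cells (in the spirit of the identity $J\tilde v=\pi\mu_v$ of Remark \ref{delte}, which already presents the discrete vorticity as a discrete Jacobian), and inside each $\en$-cell carrying a nonzero vorticity is replaced by an explicit radial Ginzburg--Landau vortex at coherence length $\delta_{\en}$. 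Bond by bond, $\tfrac12\int|\nabla w_n|^2$ away from the cores is controlled by $XY_{\en}(v_n)$ up to a factor $1+o(1)$ by Taylor expansion, each core contributes an $O(1)$ amount to $\int\tfrac1{\delta_{\en}^2}W(w_n)$ and the universal core energy to the gradient term, and Lemma \ref{cosu} gives $\|Jw_n-\pi\mu_{v_n}\|=o(|\log\en|^{h-1})$; setting $q_n:=Jw_n/(\pi|\log\delta_{\en}|^{h-1})$ then yields $d(p_n,q_n)\to0$. Here $\delta_\e$ is chosen a sub-logarithmic fraction of $\e$, small enough that the core profiles resolve inside the cells yet with $|\log\delta_\e|/|\log\e|\to1$, so that all accumulated core errors stay $o(|\log\en|^h)$; Propositions \ref{s1s2} and \ref{se} are exactly the statements guaranteeing that such a change of scale leaves the analysis unaffected. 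The reverse inequality $\XY_\e\preceq\GL_\e$ is dual: from a near-minimizer $w_n$ of $\GL_{\en}(\mu_n)$ a ball construction on the sublevel set $\{\,|w_n|\le\tfrac12\,\}$ extracts finitely many well-separated cores carrying the topological content of $w_n$; one then defines $v_n$ on the lattice of spacing $\delta_{\en}$ by $v_n(i):=w_n(i)/|w_n(i)|$ off those cores (adjusting on the cores so that $\mu_{v_n}$ records the extracted degrees) and estimates $XY_{\delta_{\en}}(v_n)$ by $GL_{\en}(w_n)$ bond by bond away from the cores and by the universal cost near them.

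The main obstacle is the simultaneous, \emph{uniform} control over the growing number $O(|\log\en|^{h-1})$ of singularities of three quantities: the total perturbation of the free-energy density under the identification of order parameters (which is what forces the careful choice of meso-scale and the regularization near the cores), the distance in the dual norm of $X$ between the discrete and the continuous representations of the singularities (needed for property (ii)), and the consistency of the degree and mass bookkeeping when clustered cores have to be merged. Away from the singularities every step is a routine Taylor expansion and a bond-by-bond comparison of discrete sums with integrals; it is only the interplay of the proliferating cores with the two vanishing scales $\en$ and $\delta_{\en}$ that is delicate, and this is precisely where the geometric-measure-theoretic arguments on the density of polyhedral boundaries in integer currents (as in \cite{F} and \cite{ABO}) are used to match singularities without spoiling the energy bounds.
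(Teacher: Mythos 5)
Your overall architecture (a cycle of $\preceq$ relations closed by transitivity, a mesoscale reparametrization $\delta_\e$, identification of order parameters via phase/interpolation, and control of the singularities in the dual norm via Lemma \ref{cosu} and a statement like Proposition \ref{cdequi}) is the right one, and your proof of $\XY_\e\preceq\SD_\e$ is exactly the paper's. However, there is a genuine gap in the three remaining relations, and it is concentrated in the step you dismiss as ``a standard ball-construction step replacing each of the $O(|\log\en|^{h-1})$ cores by a canonical single-vortex profile, at cost $O(1)$ per core and without moving the vortices in $\|\cdot\|$.'' No such result is available in this discrete setting, and it is doing all the work. Take $\SD_\e\preceq\XY_\e$: the reverse chord--arc estimate degrades to the constant $\pi^2/4>1$ on near-antipodal bonds, and the only a priori bound on the set of bonds where $|v_\e(l)-v_\e(k)|\ge\eta$ is that their total $XY$-energy is $O(|\log\e|^h)$ --- i.e.\ it may be a \emph{positive fraction} of the total energy, and nothing forces it to be localized near the $O(|\log\e|^{h-1})$ vorticity-carrying cells (think of large-angle oscillations with zero net vorticity spread over $O(|\log\e|^h)$ bonds; near-minimality modulo an $o(|\log\e|^h)$ error does not exclude this). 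So your estimate $4\pi^2 SD_{\en}(u(v_n))\le XY_{\en}(v_n)+O(|\log\en|^{h-1})$ is unproved, and with it property (i). Your claim that $\XY_\e\simeq\SD_\e$ ``requires no genuine change of scale'' is precisely where the paper disagrees: its mechanism is to average over translates of a grid of spacing $\tilde\delta_\e=\e|\log\e|^{h+1}$ (Mean Value Theorem, as in \eqref{fuxyd}--\eqref{fuxy}), which selects a sublattice on which the restricted energy is $O(|\log\e|^{h-1})$ and hence $\sup|v_\e(l)-v_\e(k)|\to0$ \emph{uniformly}; only then does the Taylor expansion give the factor $1+o(1)$ on every surviving bond, at the price of passing to the coarser lattice $\delta_\e$ and re-normalizing the measures by $|\log\delta_\e|^{h-1}/|\log\e|^{h-1}$.

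The same unproved regularization is invoked again in both directions of $\XY_\e\simeq\GL_\e$, where two further concrete problems appear. First, in $\XY_\e\preceq\GL_\e$ you set $v_n(i):=w_n(i)/|w_n(i)|$ at lattice points, but $w_n\in H^1$ has no well-defined pointwise values; the paper again uses the averaged slicing \eqref{netdis} to pick a translate of the $\tilde\delta_\e$-grid on which $w_n$ has controlled one-dimensional energy, and the Claim \eqref{claim} to force $|w_n|\to1$ uniformly there, so that the normalization is harmless --- no ball construction and no separation of cores is needed. Second, in $\GL_\e\preceq\XY_\e$ your insertion of radial vortex profiles at coherence length $\delta_\e\le\e$ makes the potential term $\frac1{\delta_\e^2}\int W$ \emph{larger} cell by cell, and controlling it on the non-vortex cells again requires the uniform smallness of neighboring spin variations that only your ball construction would provide. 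The paper's route here is different and much cheaper: Lemma 2 of \cite{AC} bounds $\frac{C}{\e^2}\int W(w(v))$ by $XY_\e(v)$ directly for the affine interpolation, which yields $\GL^{s_\e}_\e\preceq\XY_\e$ with a vanishing prefactor $s_\e$ on the potential, and Proposition \ref{se} (the rescaling $\delta_\e=\e/\sqrt{s_\e}$) upgrades $\GL^{s_\e}_\e$ to $\GL_\e$. To repair your proof you would either have to prove the discrete ball-construction/regularization lemma you invoke (including the preservation of $\mu_{v_n}$ in $X$ and the $O(1)$-per-core energy cost), or replace it by the coarse-graining argument of the paper.
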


We will prove in three steps that $\XY_\e \preceq \GL_\e\preceq
\SD_\e \preceq \XY_\e$. This relations imply the equivalence of
all the functionals by the transitivity property of the order
relation $F \preceq G$. Before the proof of Theorem \ref{mainthm}
we state and prove a proposition that will be used to identify the
discrete description of the singularities ({\it i.e.}, the discrete
measures $\mu_u$ and $\mu_v$ defined in \eqref{mudiu} and
\eqref{mudiv}, respectively) with the diffuse Jacobians of some
suitable interpolation of the corresponding fields. To this
purpose, for every positive $\e>0$, consider the triangulation
 $\ts_\e:= \{ T^\pm(i), \, i\in\e \Z^2\}$  whose
triangles are of the type
\begin{align}\label{triangoli}
& T^-(i):=\{(x_1,x_2): i_1\le x_1\le i_1+\e, \, i_2\le x_2 \le i_2
+ x_1\};\cr & T^+(i):=\overline{(i_1, i_1 + \e)\times (i_2, i_2 +
\e) \setminus T^-_\e(i)}.
\end{align}
To any  $v_\e$  in $\asexy$, we can associate the continuous field
$w(v_\e)$ in $H^1(\Om_\e;\R^2)$   given by the piece-wise linear
interpolation of $v_\e$ on the triangles of $\ts_\e$ contained in
$\Om_\e$. Note that the Jacobian of $w(v_\e)$ is piecewise constant
and that it can be extended by zero to a $L^1$ function in $\Om$.
Therefore $J(w(v_\e))$ can be seen as an element of $X$.
\par
Since we will need to localize the energy functionals to subsets
of $\Om$, we introduce the following notation. Given $A\subset
\Om$, we will denote by $GL_\e(\cdot,A)$ the restriction of the
$GL_\e$ energy density to $A$, and we denote by $XY_\e(\cdot,A)$
and $SD_\e(\cdot,A)$ the restriction of the corresponding energy
densities to the nearest neighbors contained in $A$. Finally, for
every given positive $\delta$ we set
\begin{equation}\label{idelta} I_\delta(\Om):= \{x\in \Om:
\min_{y\in\partial\Om} \min_{i\in\{1,2\}}|x_i - y_i|> \delta\}.
\end{equation}

\begin{proposition}\label{cdequi}
 Let $v_\e \in \asexy$ be a sequence with
$XY_\e(v_\e)\le C |\log\e|^h$. Then $\|\frac{J w(v_\e)}{\pi} -
\mu_{v_\e}\| \to 0$ as $\e\to0$.
\end{proposition}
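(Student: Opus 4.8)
The plan is to compare, cell by cell, the piecewise-constant measure $\frac{1}{\pi}J w(v_\e)$ arising from the triangular interpolation with the atomic measure $\mu_{v_\e}$, and to show that their difference, tested against an arbitrary Lipschitz function $\f$ with $\|\f\|_{C^{0,1}}\le 1$, is $o(1)$. The basic decomposition is to split the cells $Q_i$, $i\in\oed$, into those where the discrete vorticity $\gamma_{v_\e}(i)=0$ ("good" cells) and those where $\gamma_{v_\e}(i)\ne 0$ ("bad" cells). On a good cell one expects $J w(v_\e)$ to be genuinely small, while on a bad cell one expects $\int_{Q_i} J w(v_\e)$ to be close to $\pi\gamma_{v_\e}(i)$, so that the atom $\pi\gamma_{v_\e}(i)\delta_{i+\frac12(\e,\e)}$ and the spread-out mass $J w(v_\e)\res Q_i$ differ only by a displacement of order $\e$ inside a single cell, which costs $O(\e)$ per bad cell against a $1$-Lipschitz test function.

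The key quantitative inputs I would assemble are the following. First, a pointwise/per-bond bound: on each triangle $T$ of $\ts_\e$, $|\nabla w(v_\e)|^2$ is controlled by the sum of $|v_\e(i)-v_\e(j)|^2$ over the (at most three) edges of $T$, hence $\|J w(v_\e)\|_{L^1(\Om_\e)}\le C\,XY_\e(v_\e)\le C|\log\e|^h$; by Remark~\ref{estifo} also $|\mu_{v_\e}|(\Om)\le C|\log\e|^h$. Second, a relation between $J w(v_\e)\res Q_i$ and the discrete circulation $\mu_{v_\e}(Q_i)$: using $J w = \curl(w_1\nabla w_2)$ one writes $\int_{Q_i} Jw(v_\e) = \int_{\partial Q_i} w(v_\e)_1\,\partial_s w(v_\e)_2\,ds$, and this boundary integral, computed along the piecewise-linear interpolant, is $\pi\,\mu_{v_\e}(Q_i)$ up to an error controlled by the energy of the bonds along $\partial Q_i$ (here one compares with the exact identity $J\tilde v=\pi\mu_v$ of Remark~\ref{delte}, the difference between $w(v_\e)$ and $\tilde v$ being the difference between linear interpolation on the triangle and the "wind-the-phase" interpolation, which is higher order on low-energy bonds). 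Third, a counting estimate: the number of bad cells is at most $C\,XY_\e(v_\e)\le C|\log\e|^h$, since each nonzero vortex carries at least a fixed amount of energy.

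With these in hand I would estimate, for $\|\f\|_{C^{0,1}}\le 1$,
\begin{equation*}
\Big|\big\langle \tfrac{1}{\pi}J w(v_\e) - \mu_{v_\e},\f\big\rangle\Big|
\le \sum_{i\in\oed}\Big|\tfrac{1}{\pi}\!\int_{Q_i}\!Jw(v_\e)\,\f\,dx - \gamma_{v_\e}(i)\,\f\big(i+\tfrac12(\e,\e)\big)\Big|,
\end{equation*}
and bound each summand: on good cells by $\frac{1}{\pi}\int_{Q_i}|Jw(v_\e)|\,dx$ plus the defect $\big|\int_{Q_i}Jw(v_\e)\big|$ (both controlled by the local energy), and on bad cells by $C\e$ (from moving the mass $\pi\gamma_{v_\e}(i)$ across a cell of size $\e$) plus the boundary-error term. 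Summing, the good-cell contribution is $\le \frac{1}{\pi}\|Jw(v_\e)\|_{L^1}$ only if naively done, so the real point is that I must get a genuinely small constant there — this is where I would quarantine the mass near vortices: away from the bad cells one has $|\nabla w(v_\e)|$ small in an averaged sense, more precisely $\|Jw(v_\e)\|_{L^1(\text{good cells})}\to 0$ follows from the energy being subcritical off the vortex cores. The bad-cell contribution is $\le C\e\cdot(\#\text{bad cells}) + (\text{boundary errors}) \le C\e|\log\e|^h + o(1)\to 0$. Also, the finitely many cells meeting $\partial\Om$ (those in $\oed$ but adjacent to the boundary, handled via $I_\delta(\Om)$ and the Lipschitz boundary hypothesis) contribute $o(1)$ because $\f$ there is $O(\delta)$ and their total mass is controlled.

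The main obstacle I anticipate is precisely the good-cell estimate: showing that the interpolated Jacobian does not carry a non-vanishing fraction of its $L^1$ mass outside the vortex cores. This is the discrete analogue of the Jacobian concentration results of Jerrard–Soner, and the cleanest route is to exploit the exact identity $J\tilde v=\pi\mu_v$ from Remark~\ref{delte}: since $w(v_\e)$ and $\tilde v$ agree on $\oez$, $J w(v_\e) - J\tilde v$ is a Jacobian-type difference of two fields that are $L^\infty$-close and whose gradients have comparable $L^2$ norms on good cells, so Lemma~\ref{cosu} (in a localized form, cell by cell) gives $\|J w(v_\e) - \pi\mu_{v_\e}\|_X = \|J w(v_\e) - J\tilde v\|_X \to 0$ once one checks the product $\|w(v_\e)-\tilde v\|_{L^2}\big(\|\nabla w(v_\e)\|_{L^2}+\|\nabla\tilde v\|_{L^2}\big)\to 0$. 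The factor $\|w(v_\e)-\tilde v\|_{L^2}$ is $O(\e)$ pointwise on each triangle (both functions are unit-modulus and interpolate the same values at the corners), while the gradient norms are $O(\sqrt{XY_\e(v_\e)}) = O(|\log\e|^{h/2})$, giving a product $O(\e|\log\e|^{h/2})\to 0$. This reduces the whole proposition to Lemma~\ref{cosu} plus the bookkeeping for boundary cells, which I would carry out last.
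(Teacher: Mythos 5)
Your diagnosis of where the difficulty lies is accurate, but neither of the two routes you propose actually closes the gap. The first route (good/bad $\e$-cells) founders on exactly the point you flag: the only a priori bound on $\int|Jw(v_\e)|$ over the good cells is $C\,XY_\e(v_\e)\le C|\log\e|^h$, and this need not be $o(1)$ --- one can spread an energy of order $|\log\e|^h$ over bonds with $|v_\e(i)-v_\e(j)|$ of order one arranged so that every cell has zero vorticity, and then the interpolated Jacobian retains an $L^1$ mass of order $|\log\e|^h$ on good cells. Likewise, comparing $\int_{Q_i}Jw(v_\e)$ with $\pi\gamma_{v_\e}(i)$ cell by cell at scale $\e$ produces a boundary error on \emph{every} bond, and since each bond bounds two cells the sum of these errors is again of order $XY_\e(v_\e)\sim|\log\e|^h$, not $o(1)$.

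The second route, via Lemma~\ref{cosu} applied to the pair $(w(v_\e),\tilde v_\e)$, fails for a structural reason: the field $\tilde v_\e$ of Remark~\ref{delte} is \emph{not} in $H^1(\Om;\R^2)$. It is zero-homogeneous about the center of each cell, so its gradient behaves like $1/r$ there and $\int|\nabla\tilde v_\e|^2$ diverges logarithmically on every cell where $v_\e$ is non-constant on the corners; moreover the identity $J\tilde v_\e=\pi\mu_{v_\e}$, a sum of Dirac masses, already shows $\tilde v_\e$ cannot be $H^1\cap L^\infty$ whenever $\mu_{v_\e}\neq 0$, since for such maps the Jacobian is an $L^1$ function. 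Your claimed $O(\e)$ bound for $|w(v_\e)-\tilde v_\e|$ is also wrong: on a bond the two interpolants differ by an amount comparable to $|v_\e(i)-v_\e(j)|$, which can be of order one. The paper circumvents both problems by working at a mesoscale $\tilde\delta_\e\approx\e|\log\e|^{h+1}$: averaging over the $\tilde\delta_\e/\e$ translates of a $\tilde\delta_\e$-grid produces, by the mean value theorem, one grid whose one-dimensional skeleton carries only the fraction $\e/\tilde\delta_\e$ of the total energy, i.e.\ energy $C/|\log\e|\to0$. The comparison between $\frac{1}{\pi}Jw(v_\e)$ and $\mu_{v_\e}$ is then made square by square at scale $\tilde\delta_\e$ via Stokes, using $\tilde v_\e$ only on that low-energy skeleton (where it is well defined and close to $w(v_\e)$), while the error from replacing $\f$ by its value at the center of each $\tilde\delta_\e$-square is $C\tilde\delta_\e|\log\e|^h=C\e|\log\e|^{2h+1}\to0$. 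This averaged choice of a mesoscopic grid is the ingredient your argument is missing.
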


\proof Let $\de:= \left( \e |\log \e|^{h+1} \right)$,  denote by
$\tilde \de $ the projection of $\de$ on $\e\Z$, and set
\begin{align*}
& R^h_{\tilde\de}:= \{(x,y) \in \R^2: y\in \tilde\de\Z \}, \\
& R^v_{\tilde\de}:= \{(x,y) \in \R^2: x\in \tilde\de\Z \},\\
& R_{\tilde\de}:=( R^h_{\tilde\de} \cup R^v_{\tilde\de}) \cap
I_{\tilde\de}(\Om).
\end{align*}
If we denote by $R^1_{\tilde\de}(s)$ the class of nearest
neighbors in $ R_{\tilde\de} + (s,s)$, by the Mean Value Theorem
it is easy to prove that  there exists $s_{\tilde \de}\in \e\{0,1,
\ldots, \tfrac{\tilde \de}{\e}\}$ such that
\begin{align}\label{fuxyd}
\nonumber
XY_{\e}(v_\e) \ge & \sum\limits_{s\in \e\{0,1, \ldots, \tfrac{\tilde
\de}{\e}\}}  \sum_{l,k\in R_{\tilde\de}^1(s)}
\frac{1}{2} |v_\e(l) - v_\e(k)|^2  \\ 
 \ge & \frac{\tilde \de}{\e}
\sum_{l,k\in R_{\tilde\de}^1(s_{\tilde \de})} \frac{1}{2} |v_\e(l)
- v_\e(k)|^2.
\end{align}

By \eqref{fuxyd} we deduce that
$$
\max_{l,k\in R_{\tilde\de}^1(s_{\tilde \de})}  |v_\e(l) - v_\e(k)|
\to 0 \qquad \text{ as } \e \to 0.
$$

Let us denote by  $Q_{\tilde\de}^i$ the cubes contained in
$I_{\tilde\de}(\Om) + (s_{\tilde \de},s_{\tilde \de})$ of the type
$$Q_{\tilde\de}^i:=
\tilde\de (i+[0,1]^2) +(s_{\tilde \de},s_{\tilde
\de}), \, i\in \Z^2,
$$
and by $A_{\tilde \de}$ their union.

Let $\f\in C^{0,1}_c(\Om)$ with norm less then one, and denote by
$\tilde\f$ the locally constant function that on each
$\tilde\de$-square $Q_{\tilde\de}^i$ coincides with $\f$ on the
center of $Q_{\tilde\de}^i$. Then we have

\begin{multline}\label{stimona} |<\frac{J w(v_\e)}{\pi} - \mu_{v_\e}, \f>| \\
 \le
\left(\int_{\Om_\e \setminus A_{\tilde \de}} |J w(v_\e)| dx +
|\mu_{v_\e}|(\Om_\e \setminus A_{\tilde \de})\right) \max_{\Om_\e \setminus A_{\tilde \de}} |\f|\\
+ \sum_{Q_{\tilde\de}^i\subseteq A_{\tilde \de}} \left(
\int_{Q_{\tilde \de}^i} |J(w(v_\e))| dx + |\mu_{v_\e}|(Q_{\tilde
\de}^i) \right)|\f - \tilde \f| \\
+ \sum_{Q_{\tilde\de}^i\subseteq
A_{\tilde \de}} \left( \int_{Q_{\tilde \de}^i}   \frac{J
w(v_\e)}{\pi} dx -\mu_{v_\e}(Q_{\tilde \de}^i) \right) \tilde \f.
\end{multline}
The first two addends of the right hand side of \eqref{stimona} are
vanishing (uniformly with respect to $\f$ belonging to
$C^{0,1}_c(\Om)$ and with norm less then one), since $\max_{\Om_\e
\setminus A_{\tilde \de}} |\f|$ and $|\f - \tilde \f|$ are bounded
by $C\tilde \de$, and (see Remark \ref{estifo})
$$
\int_{\Om_\e} |J w(v_\e)| dx + |\mu_{v_\e}|(\Om_\e) \le C
|\log\e|^h.
$$
Therefore it remains to prove that also the third addend in
\eqref{stimona} is vanishing, uniformly with respect to $\f$. To
this purpose, let $\tilde v_\e$ be defined as in Remark
\ref{delte} on the boundaries of all $Q_{\tilde\de}^i$. Since
$J(w(v_\e))= \curl (w(v_\e))_1 \nabla (w(v_\e))_2$ and $ J\tilde
v_\e= \curl (\tilde v_\e)_1 \nabla (\tilde v_\e)_2$ (see also of
\eqref{sto}), we have
\begin{multline}\label{ide}
\int_{Q_{\tilde \de}^i} \frac{J w(v_\e)}{\pi}  dx
-\mu_{v_\e}(Q_{\tilde \de}^i) \\
= \frac{1}{\pi}\int_{\partial
Q_{\tilde \de}^i} (w(v_\e))_1 \frac{\partial}{\partial s}
(w(v_\e))_2 - (\tilde v_\e)_1
\frac{\partial}{\partial s} (\tilde v_\e)_2\\
= \frac{1}{2 \pi}\int_{\partial Q_{\tilde \de}^i} \big(w(v_\e) -
\tilde v_\e \big)_1 \frac{\partial}{\partial s} \big(w(v_\e) +
\tilde v_\e\big)_2
\\
 - \big(w(v_\e) - \tilde v_\e \big)_2
\frac{\partial}{\partial s} \big(w(v_\e) + \tilde v_\e\big)_1.
\end{multline}

By  \eqref{fuxyd} and \eqref{ide} we conclude

\begin{eqnarray*}
\sum_{Q_{\tilde\de}^i} \left( \int_{Q_{\tilde \de}^i} \frac{J w(v_\e)}{\pi} dx -\mu_{v_\e}(Q_{\tilde \de}^i) \right) \tilde \f 
\le C \sum_{Q_{\tilde\de}^i}  \int_{\partial Q_{\tilde \de}^i}
|w(v_\e)- \tilde v_\e| |w(v_\e)'|\\ \le C
\left(\sum_{Q_{\tilde\de}^i}  \int_{\partial Q_{\tilde \de}^i}
|w(v_\e) - \tilde v_\e|^2  \sum_{Q_{\tilde\de}^i} \int_{\partial
Q_{\tilde \de}^i} |w(v_\e)'|^2\right)^{\frac{1}{2}}\\ \le 
 C
\sum_{l,k\in R_{\tilde\de}^1(s_{\tilde \de})} |v_\e(l) -
v_\e(k)|^2  \le C \frac{1}{|\log\e|}\to 0,
\end{eqnarray*}

where  in the last but one inequality we have used that $|w(v_\e)
- \tilde v_\e|$ is controlled on the segment $[l,k]$ by  $|v_\e(l)
- v_\e(k)|$.


\begin{oss}\label{id2}
Note that, given $w(v_\e)$, we can always extend it to a function
$\tilde w(v_\e)$ defined in the whole $\Om$ such that 
$$\|\nabla
\tilde w(v_\e)\|_{L^2(\Om;\R^{2\times 2})} \le C \|\nabla
w(v_\e)\|_{L^2(\Om_\e;\R^{2\times 2})}.$$
 As a consequence, if
$w(v_\e)$ is with finite energy as in Proposition \ref{cdequi}, then it
is easy to see that $\|J w(v_\e) - J \tilde w(v_\e)\|\to 0$ as
$\e\to 0$. Therefore, the conclusion of Proposition \ref{cdequi} still
holds with $J \tilde w(v_\e)$ in place of $J w(v_\e)$.
\end{oss}

We are now in a position to prove Theorem \ref{mainthm}. By simplicity of notation 
in what follows we will replace $\e_n$ by $\e$.


\subsection{Proof of $\XY_\e \preceq \GL_\e$}

In order to follow Definition \ref{equivalence} we have first to define the function $\e\mapsto
\de$.  A convenient choice for this purpose is to set $ \tilde
\de:= \left( \e |\log \e|^{h+1} \right)$, and $\de:=\lambda_\e
\tilde \de$, where a suitable factor $\lambda_\e \to 1$ will be
chosen in the following. Let $(\mu_\e)\subset X$ be a sequence
such that $\GL_{\e}(\mu_\e)\le C$.  By the very definition of
$\GL_{\e}$ there exists $(w_\e)\subset H^1(\Om;\R^2)$ such that
$$
\frac{1}{|\log\e|^{h-1}} \frac{J w_\e}{\pi} = \mu_\e  \quad \text{
and } \quad \frac{1}{|\log\e|^h}GL_{\e}(w_\e) - \GL_{\e}(\mu_\e)
\to 0 \quad \text{ as } \e\to 0.
$$
let $I_{\tilde\de}(\Om)$ be defined as in \eqref{idelta} (with
$\delta$ replaced by $\tilde\delta_\e$), and consider the nets
$R^h_{\tilde\de}$, $R^v_{\tilde\de}$ and $R_{\tilde\de}$ defined
by

\begin{align*}
& R^h_{\tilde\de}:=  \{(x,y) \in \R^2: y\in \tilde\de\Z \}, \\
& R^v_{\tilde\de}:=  \{(x,y) \in \R^2: x\in \tilde\de\Z \},\\
& R_{\tilde\de}:=  ( R^h_{\tilde\de} \cup R^v_{\tilde\de}) \cap I_{\tilde\de}(\Om).
\end{align*}

By the Mean Value Theorem, it is easy to prove that for every
$\tilde \de$ we can find $s_{\tilde \de}\in (0,\tilde\de)^2$ such
that
\begin{equation}\label{netdis}
{\tilde\de} \int_{R_{\tilde\de} + s_{\tilde\de} } \frac{1}{2} |
w'_\e|^2 + \frac{1}{ 2\e^2} W ( w_\e) \le
 \int _{\Om} \frac{1}{2} |\nabla  w_\e|^2 + \frac{1}{\e^2} W( w_\e),
\end{equation}
where $ w'_\e (x)$ denotes the tangential derivative  of $ w_\e
\res ( R_\de + s_{\tilde\de})$ in $x$  (that is well defined by
standard slicing arguments).
\par
{\bf Claim:} let $L$ be a segment with length larger than $\e$ and
let $\f\in H^1(L;\R^2)$, then
\begin{equation}\label{claim}
\int_{L} \frac{1}{2} |\f'|^2 + \frac{1}{2 \e^2} W (\f) \ge  C
\frac{ \max_{t\in L} (|\f| - 1)^2}{\e}.
\end{equation}
We now use the Claim, that we will prove later, in order to conclude the proof of $\XY_\e \preceq \GL_\e$.
\par
Set $\tilde v_{\e}$ and  $\hat v_{\e}$ from
$R_{\tilde\de}$ to $\R^2$  as follows
$$
\tilde v_{\e}(l):=  w_\e (l+ s_{\tilde\de}), \quad
 \hat v_{\e}(l):=
\frac{\tilde v_{\e}(l)}{|\tilde  v_{\e}(l)|} \qquad
\text{for every } l\in {\tilde\de} \Z^2\cap I_{\tilde\de}(\Om).
$$
Note that by the Claim  \eqref{claim} and by the choice of
$\tilde\de$ we immediately deduce that
$$
|\tilde v_\e| - 1 \to 0 \qquad  \text{ uniformly  }  \text{ as }
\e\to 0.
$$
Here we use the assumption that $\Om$ is star-shaped with respect
to the origin.  Let $d$ be the distance between $0$ and $\partial
\Om$. We set
\begin{equation} \label{del}
\lambda_\e:=\tfrac{d}{d-2\tilde\de},
\end{equation}
so that $\Om\subseteq\lambda_\e I_{2 \tilde \de}(\Om)$ and $  \Om_{\de}^0 \subseteq \lambda_\e\big(\tilde\de \Z^2\cap I_{\tilde\de}(\Om)\big)$ . We are in a
position to introduce the function
\begin{equation}
v_{\e}(l):= \hat v_{\e}(\frac{l}{\lambda_\e}) \qquad
\text{ for every } l\in \Om_{\de}^0.
\end{equation}
By Jensen inequality and in view of \eqref{netdis} we have
\begin{align}\label{dis1}
XY_{\de}(v_\e) &= \frac{1}{2} \sum_{(i,j)\in\Om_\de^1}  |v_\e( i)
- v_\e( j)|^2 \\ &= 
 (1 + o(1))\sum_{(i,j)\in (I_{\tilde\de}(\Om))^1_{\tilde\de}} \frac{1}{2} |\tilde v_\e( i) - \tilde v_\e( j)|^2 \nonumber \\ &=
(1+ o(1)) \sum_{(i,j)\in (I_{\tilde\de}(\Om))^1_{\tilde\de}}
\frac{1}{2} | w_\e( i + s_{\tilde\de}) -  w_\e( j +
s_{\tilde\de})|^2  
\nonumber \\ &\le  \de (1+ o(1)) \int_{(R_\de + s_\de ) }
\frac{1}{2} | w'_\e|^2 
\nonumber \\ &\le
 (1+ o(1)) \int _{\Om } \frac{1}{2} |\nabla  w_\e|^2 +
\frac{1}{\e^2} W( w_\e) = (1+o(1)) GL_{\e}(w_\e),\nonumber
\end{align}
where we recall that $(I_{\tilde\de}(\Om))^1_{\tilde\de}$ denotes
the pairs $(i,j) \in \tilde\de \Z^2\cap I_{\tilde\de}(\Om)$ with
$i < j$ and $|i-j|=1$.
\par
We are in a position to introduce the sequence of variables
$(\eta_\e)\subset X$ for the $\XY_\de$ functionals, satisfying
properties i) and ii) of Definition \ref{equivalence},
$$
\eta_\e:= \frac{1}{|\log\de|^{h-1}} \mu_{v_\e}.
$$
By \eqref{dis1}, since $\tfrac{|\log \e|}{|\log\de|} \to 1$, we
deduce that property i) of  Definition \ref{equivalence} is
satisfied.
\par
Now we will prove that  $\|\mu_\e - t_\e \eta_\e\| \to 0$ for
some $t_\e\to 1$, that will ensure property ii) of Definition
\ref{equivalence}. To this purpose, set
$$ t_\e:=
\tfrac{|\log\de|^{h-1}}{|\log\e|^{h-1}}, \quad \text{ so  that }
\quad \mu_\e - t_\e \eta_\e= \tfrac{1}{|\log\e|^{h-1}} \left(
\frac{J w_\e}{\pi} - \mu_{v_\e} \right). $$

In view of Proposition \ref{cdequi}, to conclude it is enough to
show that
\begin{equation}\label{doid}
\tfrac{1}{|\log\e|^{h-1}}  \|J w_\e - Jw(v_\e) \| \to 0 \quad
\text{ as } \e\to 0.
\end{equation}

Note that (see Remark \ref{id2}), we can always extend $w(v_\e)$
to $\Om$ (and we will still denote this extension by $w(v_\e)$)
such that $\|\nabla w(v_\e)\|^2_2  \le C |\log \e|^h$. Therefore,
by Lemma \ref{cosu} and Remark \ref{id2} (since we  also have
$\|\nabla w_\e\|^2_2 \le C |\log \e|^h$), in order to prove
\eqref{doid} it is enough to check that
$$
|\log\e|^h \|w(v_\e) - w_\e\|^2_2 \to 0 \qquad \text{ as } \e\to
0.
$$
Let $I_{2\de}(\Om)$ be defined as in \eqref{idelta} with $\de$
replaced by $2\de$ (so that all the functions we have just
introduced are defined on $I_{2\de}(\Om)$). Using that
$|\Om\setminus I_{2\de}| \le C \de$ and that the potential $W$ in
the $GL_\e$ functionals controls the $L^2$ norm of $w_\e$, it is
easy to prove that
$$|\log\e|^h \|w(v_\e) -
w_\e\|^2_{L^2(\Om\setminus I_{2\de}(\Om);\R^2)} \to 0.
$$
Therefore, we will estimate the $L^2$ norm only on
$I_{2\de}(\Om)$.

By triangular inequality we have
\begin{multline}\label{bo}
\int_{I_{2\de}(\Om)}   |w(v_\e) - w_\e|^2 \le C
\int_{I_{2\de}(\Om)}  |w(v_\e) -
w(\hat v_{\e})|^2 +
| w(\hat v_{\e}) - w(\tilde v_{\e})|^2  \\ + C \int_{I_{2\de}(\Om)} | w( \tilde
v_{\e}) - w_\e(x+s_{\tilde\de})|^2 + | w_\e(x+
s_{\tilde\de}) - w_\e|^2. 
\end{multline}

We can easily estimate the first and the last addend in the
right-hand side of \eqref{bo} as follows
\begin{multline*}
\int_{I_{2\de}(\Om)} \Big( |w(v_\e) - w(\hat v_\e)|^2 +
| w_\e(x + s_{\tilde\de}) - w_\e|^2\Big) \, dx \\
\le C \de^2
\int_{\Om} |\nabla w_\e|^2 \, dx \le C \de^2 |\log \e|^h.
\end{multline*}

Let us pass to the second addend. For any $(i,j) \in
(I_{\tilde\de}(\Om))^1_{\tilde\de}$, set  $L_{i,j}:=s_{\tilde\de}+
[ i, j]$. Moreover set $m_{i,j}:= \max_{t\in L_{i,j}}
\big||w_\e(t)| - 1\big|$. Then by \eqref{netdis} and by the Claim
\eqref{claim} we get

\begin{multline*}
\int_{I_{2\de}(\Om)} | w(\hat v_{\e}) - w(\tilde
v_\e)|^2 \le C \tilde\de^2 \sum_{(i,j)\in
(I_{\tilde\de}(\Om))^1_{\tilde\de}} m_{i,j}^2 \cr \le C {\tilde
\de^2} \e \sum_{(i,j)\in (I_{\tilde\de}(\Om))^1_{\tilde\de}}
\int_{L_{i,j}} \frac{1}{2} | w'_\e|^2 + \frac{1}{2 \e^2} W ( w_\e)
\le C \tilde \de \e |\log\e|^h.
\end{multline*}

Let us pass to estimate the third addend in the right hand side of
\eqref{bo}. Let $i \in I^2_{\tilde\de}(\Om):= \{i\in \tilde\de\Z^2:
i+[0,\tilde\de]^2 \subset I_{\tilde\de}(\Om)\} $ and let $f_i: (0,\tilde
\de)^2\to \R^2$ be defined by
$$
f_i(x):=  w( \tilde v_\e(x  +  i)) -  w_\e(x + s_{\tilde
\de} +  i) .
$$
Therefore $f_i(0)=0$, and for every $x=(\bar x_1,\bar x_2)\in
(0,\tilde\de)^2$ we have
$$
|f_i(x)|^2 \le C \tilde\de \left ( \int_{0}^{\tilde\de}
|\frac{\partial}{\partial x_1} f_i(t, \bar x_2)|^2 \, dt +
\int_{0}^{\tilde \de} |\frac{\partial}{\partial x_2} f_i(0,t)|^2
\, dt \right).
$$
Integrating with respect to $x$ in  $(0,\tilde\de)^2$ we have
\begin{multline*}
\int_{(0,\tilde\de)^2} |f_i(x)|^2 \, dx  \\ \le
C \tilde\de
\left(\tilde\de \int_{(0,\tilde\de)^2} |\nabla f_i(x)|^2 \, dx +
\tilde\de^2 \int_{0}^{\tilde\de} |\frac{\partial}{\partial x_2}
f_i(0,t)|^2 \, dt \right),
\end{multline*}
and hence
\begin{multline*}
\int_{I_{2\de}(\Om)} | w( \tilde v_\e) -  w_\e(\cdot +  s_\de)|^2
\, dx \le \sum_{i\in I^2_{\tilde\de}(\Om)}
\int_{(0,\tilde\de)^2} |f_i(x)|^2 \, dx \le \\
\tilde \de^2\Big( C \int_{I_{\tilde\de}(\Om)} | \nabla w( \tilde
v_\e)|^2 + |\nabla w_\e(\cdot + s_{\tilde\de})|^2 +
\tilde \de \int_{R_{\tilde\de}} | w'( \tilde v_{\e})|^2 \\ +
|w'_\e (\cdot + s_{\tilde\de})|^2\Big) 
\le \tilde\de^2 |\log
\e|^h.
\end{multline*}
\par
We conclude by proving the Claim \eqref{claim}. Let $t \in L$ be such that
$$
m:= \max_{s\in L} ||\f|-1| = ||\f(t)|-1|,
$$
and let $(a,b)\subset L$ be the maximal interval containing $t$ such that
$ ||\f(s)|-1|\ge \frac{m}{2}$ for all $s\in (a,b)$.
Then we have
\begin{equation}\label{cl1}
\int_{L} \frac{1}{2} |\f'|^2 + \frac{1}{2 \e^2} W (\f) \ge \int_{L}
\frac{1}{2 \e^2} W (\f) \ge C\frac{ |b-a| m^2}{\e^2}
\end{equation}
if $|b-a|\ge \e/2$ we are done; otherwise either we have
$||\f(a)|-1| = \frac{m}{2}$ or $||\f(b)|-1| = \frac{m}{2}$. Then it
is very easy to see that
\begin{equation}\label{cl2}
\int_{L} \frac{1}{2} |\f'|^2 + \frac{1}{2 \e^2} W (\f) \ge \int_{L}
\frac{1}{2} |\f'|^2 \ge C \frac{ m^2}{|b-a|} \ge C \frac{ m^2}{\e},
\end{equation}
and this concludes the proof of the claim. \qed

\medskip

\subsection {Proof of $\SD_\e \preceq \XY_\e$}
As in the proof of $\XY_\e \preceq \GL_\e$, we set $\tilde\de:= \e
|\log \e|^{h+1}$, $\de=\lambda_\e\tilde\de$ with $\lambda_\e:=
\tfrac{d}{d-2\tilde\de}$, where $d$ is the distance between $0$
and $\partial \Om$. Moreover, we denote by $\hat \de=
\e[\frac{\tilde\de}{\e}]$, where $[\cdot]$ denotes the integer
part.
\par
Let$(\mu_\e)\subset X$ be  such that $\XY_{\e}(\mu_\e)\le C$. By
the very definition of $\XY_{\e}$ there exists $(v_\e)\subset
\asexy$ such that
$$
\frac{1}{|\log\e|^{h-1}} \mu_{v_\e} = \mu_\e, \qquad
\frac{1}{|\log\e|^h} XY_{\e}(v_\e) - \XY_{\e}(\mu_\e) \to 0 \quad
\text{ as } \e\to 0.
$$ 
Let $I_{\hat\de}(\Om)$ be defined as in \eqref{idelta} (with
$\delta$ replaced by $\hat\delta_\e$), and set
\begin{align}
&R^h_{\hat\de}:= \{(\e i, \hat \de j), \, i,j\in\Z\}, \\
&R^v_{\hat \de}:= \{(\hat \de i, \e j), \, i,j\in\Z\}, \\
&R_{\hat \de}:= (R^h_{\hat \de} \cup R^v_{\hat\de})\cap I_{\hat
\de}(\Om).
\end{align}
Therefore,  there exists $s_{\hat \de}\in \e\{0,1, \ldots,
\tfrac{\hat \de}{\e}\}$ such that, denoting by
$R^1_{\hat\de}(s)$ the class of nearest neighbors
$(l,k) \in  R_{\hat\de} + (s,s)$, we have
\begin{multline}\label{fuxy}
XY_{\e}(v_\e) \ge \sum_{s\in \e\{0,1, \ldots, \tfrac{\hat
\de}{\e}\}} \quad \sum_{(l,k)\in R^1_{\hat\de}(s)}
\frac{1}{2} |v_\e(l) - v_\e(k)|^2  \\
\ge \frac{\hat \de}{\e}
\sum_{(l,k)\in R^1_{\hat\de}(s_{\hat \de})} \frac{1}{2} |v_\e(l) -
v_\e(k)|^2.
\end{multline}
Let $\theta_\e(v_\e)$ be a determination of the phase of $v_\e$,
defined by the identity $v_\e(l)= e^{i  \theta_{\e}(v_\e)(l)}$ for
every $l\in\oez$. By \eqref{fuxy}, since $ XY_{\e}(v_\e)\le C|\log
\e|^h$, we immediately deduce that
$$\sup \{|v_\e(l) - v_\e(k)|, \,
(l,k)\in R^1_{\hat\de}(s_{\hat \de})\} \to 0 \text{ as } \e\to 0,
$$
so that by Taylor expansion we have

\begin{multline*}
{2\pi} \, \dist \Big( \frac{1}{2\pi} \theta_\e(v_\e)(l) -
\frac{1}{2\pi} \theta_\e(v_\e)(k) \, , \,  \Z \Big) \\=
\theta_\e(v_\e)(l) -  \theta_\e(v_\e)(k) = (1 + o(1)) |v_\e(l) -
v_\e(k)|.
\end{multline*}

\par
Let us set $A_{\tilde\de} := I_{2\tilde \de}(\Om) +
(s_{\hat\de},s_{\hat\de})$, and let $(A_{\tilde\de})^1_{\hat \de}$
be the class of $\hat\de$-nearest neighbors in $A_{\tilde\de}$. By
Jensen inequality, in view also of \eqref{fuxy}, we deduce

\begin{multline}\label{stiensd}
 \frac{4\pi^2}{2} \sum_{(l,k)\in (A_{\tilde\de})^1_{\hat \de}}
  \dist^2 \Big( \frac{1}{2\pi} \theta_\e(v_\e)(l) - \frac{1}{2\pi}
\theta_\e(v_\e)(k) \, , \,  \Z \Big) \le \\
\frac{4\pi^2}{2} \frac{\hat\de}{\e}\sum_{(l,k)\in
R^1_{\hat\de}(s_{\hat \de})}
 \dist^2 \Big( \frac{1}{2\pi} \theta_\e(v_\e)(l) -
\frac{1}{2\pi} \theta_\e(v_\e)(k) \, , \,  \Z \Big)  \\
\le (1+o(1))
XY_{\e}(v_\e).
\end{multline}

Since $\tfrac{\hat \de }{\de}\Om^0_{\de} +(s_{\hat \de},s_{\hat
\de}) \subseteq (A_{\tilde\de})^0_{\hat \de}$, we are in a
position to introduce the sequences
$$
u_\e(l):= \frac{1}{2\pi} \theta_\e(v_\e)\left(\frac{\hat \de
}{\de} l  +(s_{\hat \de},s_{\hat \de})\right) \quad \text{ for all
} l\in\Om^0_\de,$$
$$ \eta_\e:= \frac{1}{|\log \de|^{h-1}}
\mu_{u_\e},
$$
obtaining
$$
 SD_\de(u_\e) \le \frac{1}{2} \sum_{(l,k)\in
(A_{\tilde\de})^1_{\hat \de}}  \dist^2 \Big( \frac{1}{2\pi}
\theta_\e(v_\e)(l) - \frac{1}{2\pi} \theta_\e(v_\e)(k) \, , \,  \Z
\Big) .
$$

Therefore, by \eqref{stiensd} we deduce that property i) of
Definition \ref{equivalence} is satisfied. In order to check that
also property ii) of Definition \ref{equivalence} holds, it is
enough to prove that $\|\mu_\e - t_\e \eta_\e\| \to 0$ in $X$ for
some $t_\e\to 1$. Arguing as in the proof of $\XY_{\e} \preceq
\GL_{\e}$ it is enough to check that $|\log\e|^h \, \|w(e^{i\,2\pi
u_\e}) - w(v_\e)\|^2_2 \to 0$; we skip the details, that can be
easily checked by the reader. \qed

\medskip
\subsection {Proof of $\GL_\e \preceq \SD_\e$}
Since $ XY_{\e} \le 4 \pi^2 SD_{\e}$ pointwise, we immediately
deduce that $\XY_\e\preceq \SD_\e$. Therefore, the desired order
relation is obtained  by proving  $\GL_\e \preceq \XY_\e$.
\par We first  observe that by Lemma $2$ in \cite{AC},
there exists a constant $C>0$ such that, for every $v\in \asexy$
\begin{equation*}
\frac{C}{\e^2} \int_{\Om_\e} W(w(v)) \le  XY_\e(v).
\end{equation*}
Let $\lambda_\e \nearrow 1$ be such that $\lambda_\e \Om\subset
\Om_\e$ (we recall that $\Om$ is star-shaped), and such that $\lambda_\e\ge 1-\frac{\e}{c}$ for some constant $c$. Given a sequence
$t_\e\searrow 0$ we have
\begin{equation}\label{stimaac0}
(1+t_\e) XY_{\e}(v) \ge \int_{\lambda_\e \Om} \Big(\frac{1}{2}
|\nabla w(v)|^2 + \frac{C t_\e}{\e^2} W(w(v))\Big) \, dx,
 \end{equation}
for every  $v\in \asexy.$
Let now $\mu_\e$ be a sequence such that $\XY_\e(\mu_\e) \le C$,
and let $v_\e$ be such that $\tfrac{1}{|\log\e|^h} XY_\e(v_\e) -
\XY_\e(\mu_\e)\to 0$. Set then $\de = \e$,
$$
w_\e(x):= w(v_\e (\lambda_\e x))  \text{ for every } x\in \Om,
\qquad \eta_\e = \frac{1}{|\log\e|^{h-1}} \frac{J w_\e}{\pi}.
$$
Then by \eqref{stimaac0} we get
\begin{equation}\label{stimaac}
XY_{\e}(v_\e) \ge \frac{1}{1+ t_\e}GL^{Ct_\e}_{\e} (w_\e).
\end{equation}

From \eqref{stimaac}, we easily deduce that $\GL_\e^{Ct_\e}
\preceq \XY_\e$. Indeed,  Property i) of Definition
\ref{equivalence} is a direct consequence of \eqref{stimaac},
while the  proof of Property ii) follows as in the proof of
$\XY_\e \preceq \GL_\e$. Finally, choosing $t_\e\to 0$ such that
$s_\e:= C t_\e$ satisfies the assumptions of Proposition \ref{se},
we conclude that  $\GL_\e \simeq \GL_\e^{Ct_\e} \preceq
\XY_\e\preceq \SD_\e$. \qed

\begin{oss}\label{newproof}
In \cite{JS} it is proved that for $h=1$ the functionals $\GL_\e$
in \eqref{glene} are equi-coercive, and $\Gamma$-converge to the
functional $\pi |\mu|(\Om)$. In \cite{Po}[Theorem 3.4]) the same
$\Gamma$-convergence result is proved for the functionals $\SD_\e$
defined in \eqref{sdene}. In view of this  result, of Theorem
\ref{mainthm} and of Theorem \ref{consequi}, we obtain a new proof
for the compactness of the jacobians given in \cite{JS}, and of
the corresponding $\Gamma$-convergence result of Ginzburg-Landau
functionals in the logarithmic regime.
\end{oss}

\begin{oss}\label{ball}
For latter use we observe that, if we restrict the Ginzburg-Landau
functionals $GL_\e$ to the  fields $w_\e$
 valued in $B_1$, then the equivalence result stated
in Theorem \ref{mainthm} still holds true. Actually, given a
sequence $w_\e$ with finite energy we can always project it on
$B_1$, without increasing its energy, and without changing the
limiting behaviour of the corresponding topological singularities.
\par
 This $L^\infty$-bound will simplify the
proof of compactness properties of the quantity  $j_\e$ associated
with  $w_\e$, we will deal with in the next Section.
\end{oss}

\section{New results for the asymptotic of  $ SD_{\e} $}\label{New SD}

As explained in the Introduction (see also Remark \ref{newproof}),
the $\Gamma$-limit of the functionals $\SD_\e$ is known only for
$h=1$. On the other hand, the analogous result for the
Ginzburg-Landau functionals $\GL_\e$ has been proved by Jerrard
and Soner in \cite{JS2} for all values of $h$ ($h=1$ and $ h = 2$ being the
most relevant cases). In this section we use the variational
equivalence argument to deduce $\Gamma$-convergence results for
the screw dislocation model in the $|\log\e|^2$ scaling regime. We recall that this energy
scaling has been already considered  in the context of interacting
edge dislocations in \cite{GLP},  providing in the limit a
macroscopic strain gradient model for plasticity.
 We will extend
this result to our discrete model of screw dislocations without
imposing, as in \cite{GLP}, that the minimal distance between the dislocations is of order $\rho>>\e$ .
\par
Before giving the rigorous results, let us explain by  heuristic arguments why the $|\log\e|^2$ energetic regime is somehow critical, and hence gives rise to an interesting macroscopic limit. 
Assume that in the crystal there is a distribution $\mu$ of  a certain number $N_\e$ of screw dislocations of unit length. The self energy of the system is, in first approximation,  proportional to  $N_\e|\log\e|$. On the other hand, each dislocation induces also  a far field:  the macroscopic strain field $\beta$ has to satisfy the kinematic constrain $\curl\beta =  \mu$. The elastic energy depends quadratically on $\beta$, and then it is proportional to $N_\e^2$. We deduce that if $N_\e\approx |\log\e|$ then the self energy and the far field energy corresponding to the macroscopic strain $\beta$ (namely the interaction energy)  are of the same order $|\log \e|^2$. Therefore, in the limit as $\e\to 0$, the energy is given by the sum of these two contributions, a self energy, one-homogeneous with respect to the dislocation density, and an interaction energy, quadratic with respect to the macroscopic fields $\beta$'s satisfying the kinematic relation $\curl\beta = \mu$. This kind of energy can be settled in the recent strain gradient theories for plasticity introduced in \cite{FH}.
\par
We underline that, as it will be clear in our analysis (see Theorem \ref{mthmj}) the same result holds true
for the $XY$ spin system model.
\subsection{The $\Gamma$-convergence result for $\GL_\e$ in the $|\log\e|^2$ regime }
Here we recall the $\Gamma$-convergence result for the functionals
$\GL_\e$ in the energetic regime  $GL_\e \approx |\log\e|^2$
corresponding to $h=2$ given by Jerrard and Soner in \cite{JS2}. For
the sake of simplicity we will specialize the results assuming that
the order parameters $w_\e$ take values in $B_1$ (see Remark \ref{ball}).
\par Given
$w\in H^1(\Om;B_1)$, set $$ j(w):= w\times \nabla w = \Big ( w_1
(w_2)_{x_1} - w_2 (w_1)_{x_1} , w_1 (w_2)_{x_2} - w_2 (w_1)_{x_2}
\Big).
$$
Note that by definition we have $ J(w)= \tfrac{1}{2} \curl j(w)$. Consider the functionals $ \bGL_\e:
X \times L^2(\Om;\R^2) \to [0,+\infty]$ defined as follows
\begin{multline}\label{tgl}
 \bGL_\e(\mu,j):= \\
 \inf\left\{ \frac{1}{|\log\e|^2} GL_\e(w), \,
w\in H^1(\Om;B_1):  \frac{j(w)}{|\log\e|}=j, \,
\frac{J(w)}{\pi |\log\e|}=
 \mu \right\}.
\end{multline}
By \cite[Theorem 1.1 and 1.2]{JS2} we deduce the
following $\Gamma$-convergence result

\begin{theorem}[Jerrard and Soner, 2002]\label{mthmjs2}
The functionals $\bGL_\e:X\times L^2(\Om;\R^2)\to [0,+\infty]$ defined in
\eqref{tgl} are equi-coercive: if $(\mu_\e,j_\e)$ is a
sequence with bounded energy then, up to subsequences, $\mu_\e \to
\mu$ for some $\mu \in X$ and $j_\e \weak j$  weakly in
$L^2(\Om;\R^2)$, for some $j\in L^2(\Om;\R^2)$.
\par Moreover,  $\bGL_\e$  $\Gamma$-converge (with respect to the same topology) to the functional
$ \bGL:X\times L^2(\Om;\R^2) \to [0,+\infty]$, defined as
\begin{equation}\label{glglejs2}
\bGL(\mu,j):= \pi |\mu|(\Om) + \frac{1}{2}  \int_{\Om}|j|^2,
\end{equation}
if  $\mu$ is a measure in  $H^{-1}(\Om)$ and 
$\curl j= 2 \pi \mu$,
and infinity elsewhere.
\end{theorem}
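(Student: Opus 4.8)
\medskip
\noindent\textbf{Proof proposal.}
Since the statement is explicitly borrowed from \cite{JS2}, the plan is not to reprove it from scratch but to transcribe \cite[Theorems 1.1 and 1.2]{JS2} into the normalization and the infimum formulation \eqref{tgl} used here; the genuine work is the bookkeeping of constants and the reduction to the constrained framework. The first fact I would record is the elementary identity $\curl j(w)=2J(w)$ for $j(w)=w\times\nabla w$: prescribing $j(w)=|\log\e|\,j$ in \eqref{tgl} therefore already forces $J(w)=\tfrac{|\log\e|}{2}\curl j$, so the admissible class in \eqref{tgl} is empty unless $\curl j=2\pi\mu$, and when this compatibility holds one has for free $\mu=\tfrac{1}{2\pi}\curl j\in H^{-1}(\Om)$ (as $j\in L^2$). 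Thus from the start one works with pairs $(\mu,j)$ satisfying $\curl j=2\pi\mu$. I would also note that the particular choice $W=(1-|x|)^2$ is irrelevant (cf. the remark after \eqref{GL} and Proposition \ref{s1s2}), so that $GL_\e$ may be identified with the functional analysed in \cite{JS2}.

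\emph{Compactness.} Given $(\mu_\e,j_\e)$ with $\bGL_\e(\mu_\e,j_\e)\le C$, I would choose near-optimal competitors $w_\e\in H^1(\Om;B_1)$ with $\tfrac{1}{|\log\e|^2}GL_\e(w_\e)\le C+o(1)$, $j(w_\e)=|\log\e|\,j_\e$ and $J(w_\e)=\pi|\log\e|\,\mu_\e$. Since $|w_\e|\le 1$, one has $\|j(w_\e)\|_2^2\le\|\nabla w_\e\|_2^2\le 2\,GL_\e(w_\e)$, so $(j_\e)$ is bounded in $L^2(\Om;\R^2)$ and, up to a subsequence, $j_\e\weak j$; the Jacobian compactness of \cite{JS2}, valid precisely in the regime $GL_\e(w_\e)\le C|\log\e|^2$, gives along a further subsequence $\mu_\e=\tfrac{1}{\pi|\log\e|}J(w_\e)\to\mu$ in $X$, and passing to the limit in $\curl j_\e=\tfrac{2}{|\log\e|}J(w_\e)=2\pi\mu_\e$ (in the sense of distributions) identifies $\curl j=2\pi\mu$.

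\emph{The two $\Gamma$-inequalities.} For the liminf, given $(\mu_\e,j_\e)\to(\mu,j)$ with finite liminf, I would pick near-optimal $w_\e$ as above and invoke the lower bound of \cite[Theorem 1.1]{JS2}, getting
\[
\liminf_\e \bGL_\e(\mu_\e,j_\e)=\liminf_\e\frac{GL_\e(w_\e)}{|\log\e|^2}\ \ge\ \pi|\mu|(\Om)+\frac12\int_\Om|j|^2=\bGL(\mu,j),
\]
the constraints $\mu\in H^{-1}$, $\curl j=2\pi\mu$ being automatic by the first paragraph. For the limsup, given an admissible $(\mu,j)$ (otherwise $\bGL(\mu,j)=+\infty$ and nothing is needed), I would take the recovery sequence $u_\e$ of \cite[Theorem 1.2]{JS2}, truncate it to values in $B_1$ without increasing the energy and without altering the limits of $J(u_\e)/(\pi|\log\e|)$ and $j(u_\e)/|\log\e|$ (Remark \ref{ball}), then set $w_\e:=u_\e$, $\mu_\e:=J(w_\e)/(\pi|\log\e|)$, $j_\e:=j(w_\e)/|\log\e|$; now $w_\e$ is admissible for $\bGL_\e(\mu_\e,j_\e)$ in \eqref{tgl}, so $\bGL_\e(\mu_\e,j_\e)\le GL_\e(w_\e)/|\log\e|^2\to\bGL(\mu,j)$ while $(\mu_\e,j_\e)\to(\mu,j)$ in the prescribed topology.

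\emph{Main obstacle.} The delicate point is the $\Gamma$-limsup: \cite{JS2} builds the recovery sequence for targets $(\mu,j)$ of a special structure (typically $\mu$ a finite combination of Dirac masses plus an $L^2$ density), so to cover a general admissible pair I would first approximate $(\mu,j)$ in $X\times L^2$ — along a sequence on which $\bGL(\cdot,\cdot)$ converges — by such ``nice'' pairs and then diagonalize. Everything else is a direct translation of \cite{JS2}, tied together by the compatibility $\curl j(w)=2J(w)$ and the $B_1$-truncation of Remark \ref{ball}.
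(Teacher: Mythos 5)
Your proposal is correct and follows the same route as the paper, which states this result purely as a citation of \cite[Theorems 1.1 and 1.2]{JS2} with no further argument; your write-up simply makes explicit the translation bookkeeping (the compatibility $\curl j(w)=2J(w)$ forcing $\curl j=2\pi\mu$, the $L^2$ bound on $j_\e$ from the $B_1$ constraint, and the recovery-sequence/approximation step) that the authors leave implicit.
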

From Theorem \ref{mthmjs2} we immediately deduce the following
\begin{theorem}\label{mthmjs3}
The functionals $\GL_\e:X\to [0,+\infty]$ defined in \eqref{glene} with
$h=2$ are equi-coercive and $\Gamma$-converge, as $\e\to 0$, to
the functional $\GL:X\to [0,+\infty]$ defined by
\begin{equation*}
\GL(\mu):= \pi |\mu|(\Om) + \frac{1}{2} \inf \Big\{
\int_{\Om}| j |^2 \, dx, \, j \in L^2(\Om;\R^2): \,  \curl j =
2\pi \mu\Big\},  
\end{equation*}
if $\mu$ is a measure in  $H^{-1}(\Om),$
and infinity elsewhere.
\end{theorem}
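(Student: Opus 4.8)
The plan is to read Theorem~\ref{mthmjs3} as the statement obtained from Theorem~\ref{mthmjs2} by minimizing out the current variable $j$. Comparing \eqref{glene} (with $h=2$) and \eqref{tgl}, and using that $J(w)=\tfrac12\curl j(w)$ so the constraint on $J(w)$ is automatically compatible with the one on $j(w)$, I would first record, for the $B_1$-restricted Ginzburg--Landau functional, the identity
\[
\GL_\e(\mu)=\inf_{j\in L^2(\Om;\R^2)}\bGL_\e(\mu,j)\qquad\text{for every }\mu\in X,
\]
(for a $B_1$-valued competitor $w$ one has $j(w)=w\times\nabla w\in L^2$, so the extra variable is always available); by Remark~\ref{ball}, together with Theorem~\ref{mainthm} and Theorem~\ref{consequi}, the $B_1$-restricted $\GL_\e$ and the one in \eqref{glene} are variationally equivalent, so proving $\Gamma$-convergence and equi-coercivity for either gives it for the other. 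Minimizing \eqref{glglejs2} over $j$ in the same way gives $\inf_j\bGL(\mu,j)=\pi|\mu|(\Om)+\tfrac12\inf\{\int_\Om|j|^2:\curl j=2\pi\mu\}$ when $\mu\in H^{-1}(\Om)$ and $+\infty$ otherwise, which is exactly the functional $\GL$ in the statement. So the theorem reduces to showing that $\Gamma$-convergence is stable under this partial minimization, the key ingredient being the equi-coercivity in the $j$ variable supplied by Theorem~\ref{mthmjs2}.

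For equi-coercivity of $(\GL_\e)$: given $\mu_\e$ with $\GL_\e(\mu_\e)\le C$, choose $j_\e$ with $\bGL_\e(\mu_\e,j_\e)\le\GL_\e(\mu_\e)+1\le C+1$; Theorem~\ref{mthmjs2} then yields a subsequence along which $\mu_\e\to\mu$ strongly in $X$. For the $\Gamma$-liminf inequality, take $\mu_\e\to\mu$ in $X$ with $\liminf_\e\GL_\e(\mu_\e)<+\infty$ (else there is nothing to prove), pass to a subsequence realizing this liminf, and pick $j_\e$ with $\bGL_\e(\mu_\e,j_\e)\le\GL_\e(\mu_\e)+\e$. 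Since these energies are bounded, the equi-coercivity in Theorem~\ref{mthmjs2} gives a further subsequence with $j_\e\weak j$ weakly in $L^2$; as also $\mu_\e\to\mu$ in $X$, the $\Gamma$-liminf inequality for $\bGL_\e$ gives
\[
\GL(\mu)\le\bGL(\mu,j)\le\liminf_\e\bGL_\e(\mu_\e,j_\e)=\liminf_\e\GL_\e(\mu_\e),
\]
where the first inequality is $\GL(\mu)=\inf_{j'}\bGL(\mu,j')$ and the last equality follows from $\GL_\e(\mu_\e)\le\bGL_\e(\mu_\e,j_\e)\le\GL_\e(\mu_\e)+\e$ along the chosen subsequence.

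For the $\Gamma$-limsup inequality, fix $\mu\in X$; we may assume $\GL(\mu)<+\infty$, so $\mu\in H^{-1}(\Om)$ and the affine set $\{j\in L^2:\curl j=2\pi\mu\}$ is nonempty (take $j=\nabla^\perp\psi$ with $\psi\in H^1_0(\Om)$, $\Delta\psi=2\pi\mu$). Minimizing the strictly convex coercive functional $j\mapsto\tfrac12\int_\Om|j|^2$ over this closed affine set produces a minimizer $j^*$ with $\bGL(\mu,j^*)=\GL(\mu)$. A recovery sequence $(\mu_\e,j_\e)$ for $\bGL_\e$ at $(\mu,j^*)$, provided by Theorem~\ref{mthmjs2}, satisfies $\mu_\e\to\mu$ in $X$ and $\limsup_\e\bGL_\e(\mu_\e,j_\e)\le\bGL(\mu,j^*)=\GL(\mu)$; since $\GL_\e(\mu_\e)=\inf_j\bGL_\e(\mu_\e,j)\le\bGL_\e(\mu_\e,j_\e)$, the sequence $(\mu_\e)$ is a recovery sequence for $(\GL_\e)$, and the proof is complete.

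The only genuinely delicate point is the one isolated in the first paragraph: an arbitrary near-optimal choice of $j_\e$ for $\GL_\e(\mu_\e)$ carries no compactness a priori, and what keeps it from escaping is precisely the equi-coercivity half of Theorem~\ref{mthmjs2}, which guarantees that energy-bounded pairs $(\mu_\e,j_\e)$ are precompact in $X\times L^2$ for the strong--weak topology. Beyond this and the elementary convex-analysis facts above, no new ingredient is required.
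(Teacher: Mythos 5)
Your proof is correct and takes exactly the route the paper intends: the paper presents Theorem~\ref{mthmjs3} as an ``immediate'' consequence of Theorem~\ref{mthmjs2}, and what you have written is the standard partial-minimization argument that makes this precise, with the key observation (correctly isolated) that the equi-coercivity of $\bGL_\e$ in the $j$ variable is what gives compactness of the near-optimal currents in the $\Gamma$-liminf step. The reduction to the $B_1$-valued case via Remark~\ref{ball} and Theorem~\ref{consequi} is also handled as the paper does.
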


\subsection{New results for homogenizing dislocations in the $|\log\e|^2$ regime}
From the variational equivalence between $\GL_\e$ and $\SD_\e$
stated in Theorem \ref{mainthm} (see also Remark \ref{ball}), we
deduce the equivalent result stated in Theorem \ref{mthmjs3} for
the energy functionals corresponding to screw dislocations.
\par
For the reader convenience, we state the $\Gamma$-convergence
result for the dislocation energy functionals $\F_\e$ defined in \eqref{def:F_e}
according with \eqref{sdene} with $h=2$, but without the
pre-factor $4\pi^2$ that has been useful to compare the $SD$ model with
$XY$ and $GL$ models, but which has not physical meaning.

\begin{theorem}\label{mthmsd}
The functionals $\F_\e:X\to [0,+\infty]$, defined by
\begin{equation}\label{def:F_e}
\F_\e(\mu):=\frac{1}{ |\log \e|^2} \inf \left\{ SD_\e(u), \, u\in
\asesd: \frac{\mu_u}{|\log\e|}  =\mu  \right\},
\end{equation}
are equi-coercive and $\Gamma$-converge as $\e\to 0$ to the
functional $\F:X\to [0,+\infty]$ defined by
$$
\F(\mu):= \frac{1}{4\pi} |\mu|(\Om) + \frac{1}{2} \inf \Big\{
\int_{\Om}|\beta|^2, \beta\in L^2(\Om;\R^2), \,  \curl \beta =
\mu\Big\},
$$
 if $\mu$ is a measure in  $H^{-1}(\Om)$, and infinity elsewhere.
\end{theorem}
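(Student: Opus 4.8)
The plan is to deduce Theorem \ref{mthmsd} directly from the variational equivalence established in Theorem \ref{mainthm}, combined with the Jerrard--Soner $\Gamma$-convergence result in the form of Theorem \ref{mthmjs3}. The key observation is that, comparing \eqref{sdene} with $h=2$ and \eqref{def:F_e}, one has $\SD_\e = 4\pi^2\,\F_\e$, so it suffices to understand the $\Gamma$-limit and the equi-coercivity of $\SD_\e$ and then divide by $4\pi^2$.

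First I would invoke Theorem \ref{mainthm}, which gives $\SD_\e \simeq \GL_\e$; by Remark \ref{ball} this equivalence persists when the Ginzburg--Landau order parameters are restricted to take values in $B_1$, so it is compatible with the formulation of Theorem \ref{mthmjs3}. Since by Theorem \ref{mthmjs3} the family $\GL_\e$ (with $h=2$) is equi-coercive and $\Gamma$-converges in $X$ to the functional $\GL$ of that statement, Theorem \ref{consequi} immediately yields that $\SD_\e$ is equi-coercive as well and $\Gamma$-converges to the same limit $\GL$. Since multiplying a family of functionals by a fixed positive constant preserves both equi-coercivity and $\Gamma$-convergence (the $\Gamma$-limit being multiplied by the same constant), from $\F_\e = \tfrac{1}{4\pi^2}\SD_\e$ I conclude that $\F_\e$ is equi-coercive and $\Gamma$-converges to $\tfrac{1}{4\pi^2}\GL$.

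It then remains to check that $\tfrac{1}{4\pi^2}\GL = \F$ as functionals on $X$. Using the expression for $\GL$ in Theorem \ref{mthmjs3}, this amounts to the identity
\[
\frac{\pi}{4\pi^2}\,|\mu|(\Om) + \frac{1}{8\pi^2}\inf\Big\{\int_\Om |j|^2\, dx : j\in L^2(\Om;\R^2),\ \curl j = 2\pi\mu\Big\} = \F(\mu),
\]
whose first term already equals $\tfrac{1}{4\pi}|\mu|(\Om)$. For the second term I would perform the change of variables $\beta := j/(2\pi)$, which gives a bijection between the admissible fields $j$ (with $\curl j = 2\pi\mu$) and the admissible fields $\beta$ (with $\curl\beta = \mu$) and satisfies $\int_\Om|\beta|^2 = \tfrac{1}{4\pi^2}\int_\Om|j|^2$; hence $\tfrac{1}{8\pi^2}\inf\int_\Om|j|^2 = \tfrac12\inf\int_\Om|\beta|^2$, which matches the definition of $\F$. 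The membership condition ``$\mu$ is a measure in $H^{-1}(\Om)$'' (and the value $+\infty$ elsewhere) is unaffected by this rescaling, so $\tfrac{1}{4\pi^2}\GL = \F$ and the theorem follows.

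There is no genuine obstacle here, only a point requiring care: one must keep the constant bookkeeping consistent between the $GL$-normalization (with its factors of $\pi$ coming from $J(w)/\pi = \mu$ and $\curl j = 2\pi\mu$) and the $SD$-normalization (carrying the prefactor $4\pi^2$ in $\SD_\e$ which is absent in $\F_\e$). Once the algebra is carried out as above, everything is an immediate consequence of the abstract transfer principle of Theorem \ref{consequi}.
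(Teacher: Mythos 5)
Your argument is correct and is exactly the route the paper takes: the theorem is presented there as a direct consequence of Theorem \ref{mainthm} and Theorem \ref{mthmjs3}, with the transfer of equi-coercivity and of the $\Gamma$-limit going through Theorem \ref{consequi}, just as you describe. Your constant bookkeeping — $\F_\e = \tfrac{1}{4\pi^2}\,\SD_\e$ combined with the substitution $\beta = j/(2\pi)$ to identify $\tfrac{1}{4\pi^2}\,\GL$ with $\F$ — is the (omitted) computation behind the paper's one-line proof, and it checks out.
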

\begin{proof}
The proof is a straightforward consequence of Theorem \ref{mainthm} and Theorem \ref{mthmjs3}.
\end{proof}

In order to give the analogue of Theorem \ref{mthmjs2} for the
$XY$ and the screw dislocations model, let us associate to any
discrete strain $\beta^e_{u}$, with $u\in \asesd$, its
corresponding  piecewise constant strain field  $\hat
\beta^e_{u}:= \pr (\nabla w(u))$ in $\Om_\e$ (where $\pr$ is
defined component by component as in \eqref{pr}), and extend it to
zero in $\Om\setminus \Om_\e$. Moreover, given $v\in \asexy$,
we set $\hat j_{v}:=2\pi \pr (\nabla
w(\tfrac{1}{2\pi}\theta(v)))$ in $\Om_\e$, so that $\hat
j_{v} = 2\pi \hat\beta^e_{\theta(v)}$,
 and extend it to zero in $\Om\setminus
\Om_\e$.
\par
We are in a position to introduce  the functionals $ \bXY_\e :
X\times L^2(\Om;\R^2) \to [0,+\infty]$ defined as
\begin{multline}\label{txw}
 \bXY_\e (\mu,j):= \\
 \inf_v
 \left\{
 \frac{1}{|\log\e|^2} XY_\e(v), \,
v\in \asexy(\Om):   \frac{\hat j_{v}}{|\log\e|}=j, \,
\frac{\mu_{v}}{|\log\e|}=\mu
 \right\}.
\end{multline}

and the functionals $ \bSD_\e : X\times L^2(\Om;\R^2)\to [0,+\infty]$ defined as

\begin{multline}\label{tsd}
\bSD_\e(\mu, j):=\\  \inf_u \left\{\frac{4\pi^2}{|\log\e|^2}
SD_\e(u), \, u\in \asesd(\Om): \frac{2\pi \hat
\beta^e_{u}}{|\log\e|} = j, \, \frac{\mu_{u}}{|\log\e|} = \mu
\right\}.
\end{multline}
The following Theorem establishes the variational equivalence for
the functionals $ \bGL_\e,\, \bXY_\e$ and $\bSD_\e$ with respect to the strong topology in $X$
and the weak topology in $L^2(\Om;\R^2)\to \R$.

\begin{theorem}\label{mthmj}
The functionals $ \bGL_\e,\, \bXY_\e$ and $\bSD_\e$ are variationally equivalent.
\end{theorem}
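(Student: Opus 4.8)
The plan is to prove the chain $\bXY_\e \preceq \bGL_\e \preceq \bSD_\e \preceq \bXY_\e$ and then conclude by transitivity of $\preceq$, following the same architecture as the proof of Theorem~\ref{mainthm}. For each of the three relations I would rerun, essentially verbatim, the construction already used for the corresponding relation among $\XY_\e$, $\GL_\e$ and $\SD_\e$: the mesoscale reparametrization $\tilde\de = \e|\log\e|^{h+1}$ (here $h=2$), the correcting factor $\lambda_\e \to 1$ adapted to the star-shapedness of $\Om$, the net-averaging supplied by the Mean Value Theorem, and the interpolation of the discrete order parameter followed by projection onto $\sss^1$. The only addition is that one must now carry the auxiliary field $j$ along the construction. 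Property~i) (the energy inequality) and the strong $X$-convergence of the singular parts $\mu$ are then literally unchanged, and, exactly as in Theorem~\ref{mainthm}, I would use Remark~\ref{ball} to keep every Ginzburg--Landau competitor $B_1$-valued.

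The genuinely new point, common to all three steps, is to check that the auxiliary current of the field produced by the construction converges to the prescribed $j$ weakly in $L^2(\Om;\R^2)$. I would carry this out first for $\bXY_\e \preceq \bGL_\e$. Starting from a near-optimal $B_1$-valued $w_\e$ with $j(w_\e)/|\log\e| = j_\e$, $J(w_\e)/(\pi|\log\e|) = \mu_\e$ and $GL_\e(w_\e)/|\log\e|^2 - \bGL_\e(\mu_\e,j_\e) \to 0$, one forms $v_\e \in \asexy$ by restricting $w_\e$ to the mesoscale net, projecting its values onto $\sss^1$ (legitimate because $|w_\e| \to 1$ uniformly on the net, by the Claim~\eqref{claim}), and rescaling by $\lambda_\e$, and sets the competitor equal to $\bigl(\mu_{v_\e}/|\log\de|,\,\hat j_{v_\e}/|\log\de|\bigr)$. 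The energy estimate~\eqref{dis1} and the strong convergence of the $\mu$'s go through as in the proof of $\XY_\e \preceq \GL_\e$. For the $L^2$ component one notes that both $j_\e$ and $\hat j_{v_\e}/|\log\de|$ are bounded in $L^2$ (the first by the energy bound, the second by Lemma~2 of \cite{AC}), so it suffices to test against $C^\infty_c(\Om)$; writing $\hat j_{v_\e}$ as the piecewise-constant current associated with the interpolation $w(v_\e)$ (it differs from $j(w(v_\e))$ only by the $\Z$-rounding error of a gradient, which is small away from the singular set since $v_\e$ is $\sss^1$-valued on a fine net), and then integrating by parts to move the derivative off the difference $w(v_\e) - w_\e$, one reduces the matter to the estimate $|\log\e|^h\,\|w(v_\e) - w_\e\|_2^2 \to 0$ already proved for Theorem~\ref{mainthm}. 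Alternatively, one may isolate this as a lemma parallel to Proposition~\ref{cdequi}, asserting weak $L^2$-closeness of $\hat j_{v_\e}$ and $j(w(v_\e))$ along energy-bounded sequences.

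The two remaining relations are lighter. For $\bSD_\e \preceq \bXY_\e$ I would, after the same net reparametrization, identify $v_\e = e^{i\theta}$ with the displacement $u_\e = \tfrac{1}{2\pi}\theta$; here the compatibility of the second variable is \emph{exact}, by the identity $\hat j_v = 2\pi\hat\beta^e_{\theta(v)}$ recorded before the statement, so only the argument used for $\SD_\e \preceq \XY_\e$ (which handles $\mu$) has to be repeated. For $\bGL_\e \preceq \bSD_\e$ I would first observe that $XY_\e \le 4\pi^2 SD_\e$ pointwise, combined again with $\hat j_v = 2\pi\hat\beta^e_{\theta(v)}$, gives $\bXY_\e \preceq \bSD_\e$ directly (with $\de = \e$ and $t_\e \equiv 1$), and then prove $\bGL_\e \preceq \bXY_\e$ exactly as $\GL_\e \preceq \XY_\e$: invoke Lemma~2 of \cite{AC} to recover the potential term, use estimate~\eqref{stimaac} and the rescaled field $w_\e(x) = w(v_\e(\lambda_\e x))$, and absorb the prefactor $s_\e = Ct_\e$ by Proposition~\ref{se}; the weak $L^2$ convergence of $j(w_\e)/|\log\e|$ to $\hat j_{v_\e}/|\log\e|$ follows as in the first step, the dilation by $\lambda_\e \to 1$ being harmless for weak limits. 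Transitivity of $\preceq$ then yields the full equivalence.

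I expect the only real difficulty to be precisely the control of the auxiliary currents near the vortex and dislocation cores: one must ensure that the part of $j$ carried by the cores does not survive in the weak $L^2$ limit. This is where the $|\log\e|^2$ scaling and the mesoscale choice $\tilde\de = \e|\log\e|^{h+1}$ are used --- the core contribution to $\int_\Om |j(w_\e)|^2$ is only of order $|\log\e|$ and is supported on a set of measure $o(1)$, so after division by $|\log\e|^2$ it disappears in the limit. Everything else is an organizational reduction to Theorem~\ref{mainthm}, Propositions~\ref{cdequi} and \ref{se}, and the estimates of \cite{AC} and \cite{JS2}.
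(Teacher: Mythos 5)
Your proposal follows essentially the same route as the paper: reduce to the chain $\bXY_\e \preceq \bGL_\e \preceq \bSD_\e \preceq \bXY_\e$ by rerunning the constructions of Theorem \ref{mainthm}, and isolate as the only new point the weak $L^2$ convergence of the auxiliary currents, which you handle exactly as the paper does --- first comparing $j(w_\e)$ with $j(w(v_\e))$ via the product identity, H\"older and integration by parts reduced to $|\log\e|^h\|w(v_\e)-w_\e\|_2^2\to 0$, then comparing $j(w(v_\e))$ with $\hat j_{v_\e}$ by a Taylor expansion on the squares of controlled oscillation together with the vanishing measure of the bad set. The paper makes the second comparison quantitative by fixing the oscillation threshold $\rho_\e=\de|\log\de|^2$ and estimating $|\Om\setminus\Om_{\de,\rho_\e}|\le C/|\log\de|^2$, which is precisely the mechanism you sketch in your closing paragraph, so the argument is correct and matches the paper's.
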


\begin{proof} The proof  follows the lines of Theorem \ref{mainthm}.
One has only to check that for each change of variables involved
in the proof, the corresponding fields $j(w_\e)$,
$\hat j_{v_\e}$ and $2\pi\hat \beta^e_{u_\e}$, rescaled by
$|\log\e|$ share the same weak limit in $L^2$. Let us check it
only for the order relation $ \bXY_\e \preceq \bGL_\e$, the other
order relations being analogous.
\par
To this purpose, let $(\mu_\e,j_\e)$ be such that $\bGL_\e
(\mu_\e,j_\e)\le C$, and let $w_\e$ be such that
$$
\frac{J(w_\e)}{\pi |\log\e|} =\mu_\e, \quad
\frac{j(w_\e)}{|\log\e|}
 = j_\e, \quad  \frac{1}{|\log\e|^2} GL(w_\e) -
\bGL_{\e}(\mu_\e,j_\e)  \to 0  \text{ as } \e\to 0.
$$

Let now $\de$  and $v_\e$ be as in the proof of the order
relation $\XY_\e\preceq \GL_\e$ in Theorem \ref{mainthm} (with
$h=2$), satisfying
$$
XY_\de(v_\e) \le GL_\e(w_\e) + o(1), \quad \|w(v_\e) -
w_\e\|_2\to 0, \quad \frac{1}{|\log\de|} \mu_{v_\e} - \mu_\e\to
0.
$$
In order to conclude, it is left to prove that
\begin{equation}\label{concl}
\frac{1}{|\log\e|} (j (w_\e) - \hat j_{v_\e})\weak 0 \text{
in } L^2(\Om;\R^2).
\end{equation}
 We first prove that
\begin{equation}\label{convj}
\frac{1}{|\log\e|} (j(w_\e) - j(w(v_\e)) \weak 0 \quad \text{ in
} L^2(\Om, \R^2).
\end{equation}
Since we have
$$ \frac{1}{|\log\e|} (j(w_\e) - j(w(v_\e))) =
(w_\e - w(v_\e)) \times \frac{\nabla w_\e}{|\log\e|} +
\frac{w(v_\e)}{|\log \e|} \times \nabla (w_\e - w(v_\e)),
$$
by Holder inequality and by integration by parts it is easy to
deduce \eqref{convj}.
\par
Now, to obtain \eqref{concl} it remains to check that
$\tfrac{1}{|\log\de|} (j (w(v_\e)) - \hat j_{v_\e}) \weak
 0$ in $L^2(\Om;\R^2)$. To this purpose, set $\rho_\e:= \de|\log \de|^2$, and let
 $\Om_{\de,\rho_\e}$ be the union of $\de$-squares $Q_i$ in $\Om_\de$ such
 that the oscillation of $w(v_\e)$ on $Q_i$ is bounded by $\rho_\e$.
 Since $XY_\de (v_\e) \le C|\log\de|^2$, it easily follows that
 $$
 |\Om \setminus \Om_{\de,\rho_\e}|\le \frac{C}{|\log\de|^2}\to 0 \quad\text{ as } \e\to 0.
 $$
Therefore, since
 $\tfrac{1}{|\log\de|}j(w(v_\e))$ and  $\tfrac{1}{|\log\de|} \hat j_{v_\e}$ are bounded  in
 $L^2(\Om;\R^2)$,  we get
 \begin{equation}\label{zeromeasure}
\frac{1}{|\log\de|}\Big( j(w(v_\e)) - \hat j_{v_\e} \Big)
\Big(1-\chi_{\Om_{\de,\rho_\e}}\Big) \weak 0  \text{ in }
L^2(\Om;\R^2) \text{ as } \e\to 0.
\end{equation}
To conclude the proof of \eqref{concl} it remains to show that
\begin{equation}\label{stifi}
 \lim_{\e\to 0}
\|j(w(v_\e) -
\hat j_{v_\e}\|_{L^\infty(\Om_{\de,\rho_\e};\R^2)} = 0.
\end{equation}
To this purpose, notice that,  on each $\de$-square $Q_i\in
\Om_{\de,\rho_\e}$ we have  (for $\e$ small enough)
$\hat j_{v_\e}= \nabla w(\theta(v_\e))$. Therefore by
Taylor expansion
$$
|j(w(v_\e)) - \hat j_{v_\e}| = |j(w(v_\e)) - \nabla
w(\theta(v_\e))|\le C \frac{\rho_\e^2}{\de}= C \de |\log\de|^4\to
0  
$$
 on $ \Om_{\de,\rho_\e}$, that clearly implies  \eqref{stifi}, and this concludes the proof
of \eqref{concl}.
\par
Essentially, the same arguments can be used to prove all the order
relations between the functionals $ \bGL_\e,\, \bXY_\e$ and
$\bSD_\e$, so that we prefer to skip the details.
\end{proof}

From Theorem \ref{mthmj} we immediately deduce the analogous
$\Gamma$-convergence result for the screw dislocation functionals.
In particular,  we obtain a new result in the context of
homogenizing dislocations, generalizing the results in \cite{GLP}
to the scalar case of discrete interacting screw dislocations.
 For the reader convenience, we state the result introducing the
dislocation energy functionals $\G_\e$, defined in \eqref{fine} according with
\eqref{tsd}, but without the pre-factor $2\pi^2$ in the
strain, and without the prefactor $4\pi^2$ .

\begin{theorem}\label{mthmsd2}
The functionals $\G_\e: X\times L^2(\Om;\R^2)\to [0,+\infty]$ defined by
\begin{multline}\label{fine}
\G_\e(\mu,\beta):= \\
 \inf_u \left\{\frac{1}{|\log\e|^2} SD_\e(u),
\, u\in \asesd(\Om): \frac{ \hat \beta^e_{u}}{|\log\e|} =\beta, \,
\frac{\mu_{u}}{|\log\e|} = \mu \right\},
\end{multline}
 are equi-coercive and $\Gamma$-converge, with respect
to the strong convergence in $X$ and the weak convergence in
$L^2(\Om;\R^2)$, to the functional $ \G:X\times L^2(\Om;\R^2) \to
[0,+\infty]$ defined by
\begin{equation}\label{maene}
 \G(\mu):= \frac{1}{4\pi} |\mu|(\Om) + \frac{1}{2}
\int_{\Om}|\beta|^2, 
\end{equation}
if  $\mu$ is a measure in 
$H^{-1}(\Om) \text{ and } \curl \beta =  \mu$,
and $+\infty$ elsewhere.
\end{theorem}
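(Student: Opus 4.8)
The plan is to deduce Theorem \ref{mthmsd2} from the variational equivalence of Theorem \ref{mthmj} and the Jerrard--Soner result recalled in Theorem \ref{mthmjs2}, exactly as Theorem \ref{mthmsd} was obtained from Theorem \ref{mainthm} and Theorem \ref{mthmjs3}; essentially the only work is tracking the normalization constants. The first step is to record the elementary identity relating $\G_\e$ with the functional $\bSD_\e$ of \eqref{tsd}. Performing the change of variable $j=2\pi\beta$ in the infimum defining $\bSD_\e$ and factoring the constant out of the energy, one obtains
\[
\G_\e(\mu,\beta)=\tfrac{1}{4\pi^2}\,\bSD_\e(\mu,2\pi\beta)\qquad\text{for all }(\mu,\beta)\in X\times L^2(\Om;\R^2),
\]
since the constraint $\tfrac{2\pi\hat\beta^e_u}{|\log\e|}=2\pi\beta$ coincides with $\tfrac{\hat\beta^e_u}{|\log\e|}=\beta$ and $\tfrac{4\pi^2}{|\log\e|^2}SD_\e=4\pi^2\cdot\tfrac{1}{|\log\e|^2}SD_\e$.

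Next I would invoke the results already established. By Theorem \ref{mthmj} the family $(\bSD_\e)$ is variationally equivalent to $(\bGL_\e)$, and by Theorem \ref{mthmjs2} the family $(\bGL_\e)$ is equi-coercive and $\Gamma$-converges, with respect to the strong topology of $X$ and the weak topology of $L^2(\Om;\R^2)$, to the functional $\bGL$ of \eqref{glglejs2}. Theorem \ref{consequi} then transfers both properties, so that $(\bSD_\e)$ is itself equi-coercive and $\Gamma$-converges to $\bGL$ in the same topology.

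It remains to push these conclusions through the identity above. Multiplication of the functionals by the positive constant $\tfrac{1}{4\pi^2}$ rescales the $\Gamma$-limit by the same factor and does not affect equi-coercivity, while the map $\beta\mapsto 2\pi\beta$ is a homeomorphism of $L^2(\Om;\R^2)$ endowed with its weak topology (inverse $j\mapsto\tfrac{1}{2\pi}j$), so precomposition with it preserves both $\Gamma$-convergence and equi-coercivity. Therefore $(\G_\e)$ is equi-coercive and $\Gamma$-converges to the functional $(\mu,\beta)\mapsto\tfrac{1}{4\pi^2}\bGL(\mu,2\pi\beta)$. Finally, using the explicit form \eqref{glglejs2}, the admissibility condition $\curl(2\pi\beta)=2\pi\mu$ with $\mu\in H^{-1}(\Om)$ is equivalent to $\curl\beta=\mu$ with $\mu\in H^{-1}(\Om)$, and on the admissible set one computes $\tfrac{1}{4\pi^2}\bGL(\mu,2\pi\beta)=\tfrac{1}{4\pi}|\mu|(\Om)+\tfrac{1}{2}\int_\Om|\beta|^2$, which is exactly the functional $\G$ of \eqref{maene}. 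Since every ingredient has been proved beforehand, I expect no genuine analytic obstacle; the only point demanding care is the bookkeeping of the constants $4\pi^2$ and $2\pi$ through the change of variables, and the (routine) remark that equi-coercivity and $\Gamma$-convergence are stable under a positive rescaling and under a linear homeomorphism of the underlying space.
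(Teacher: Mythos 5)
Your proposal is correct and follows exactly the route the paper intends: Theorem \ref{mthmsd2} is stated there as an immediate consequence of Theorems \ref{mthmj}, \ref{mthmjs2} and \ref{consequi}, and your identity $\G_\e(\mu,\beta)=\tfrac{1}{4\pi^2}\,\bSD_\e(\mu,2\pi\beta)$ together with the constant bookkeeping supplies precisely the (omitted) verification, with the limit functional coming out as in \eqref{maene}.
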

The $\Gamma$-limit $\G$ in \eqref{maene} represents the
macroscopic energy corresponding to a density of dislocations
$\mu$ and a macroscopic strain $\beta$. The first term in $\G$
represent the self energy of the dislocation density, it is
$1$-homogeneous, and therefore it is compatible with concentration
of dislocations. Notice that the compatibility
condition $\mu=\curl\beta$ agrees with concentration of the
density of dislocations, whenever $\mu\in H^{-1}(\Om)$. This is
the case of concentration on lines, according with the well known
configuration of wall dislocations.
The second term is the elastic energy of the
macroscopic strain $\beta$, and represents an interaction energy
of the dislocations. This kind of macroscopic
energies, depending  also on the derivatives of the strain
$\beta$, are usually referred to as  strain gradient
theories in plasticity and they have been introduced in \cite{FH}. The energy $\G$ derived by $\Gamma$-convergence represents then a justification of such phenomenological theory, and provides explicit   self and the interaction energy densities.

\section{Further extensions and conclusions}\label{comments}
In this paper we have investigated the variational equivalence of
some model characterized by the presence of  topological
singularities. It is natural to ask if our method can be extended
to other contexts in the huge field of modelling singularities and
in particular dislocations.
\par
In this Section we propose some extension of our approach. We first
consider the so called {\it core radius approach} to
dislocations in the two dimensional setting and then we 
investigate the case of three dimensional dislocations.

\subsection{The core radius approach}
In order to deal with the singularity of the stress field around a
dislocation, a very fruitful approach consists in removing a
region of size $\e$ around each dislocation, usually referred to
as {\it core region}, obtaining in this way a $L^2$ integrable
field on the complementary domain. This method has been exploited
very recently in variational models for dislocations \cite{CL},
\cite{Po}, \cite{GLP}. Its feature is that the discreteness
of the problem is carried by the length-scale $\e$, representing
the atomic distance, while the mathematical framework is
continuous.
\par
Within the core radius approach, many mathematical details have to
be fixed in order to make the $\Gamma$-convergence problem well
posed and doable: for instance in \cite{Po} it has been introduced
a small penalization for the number of $\e$-disks removed by the
domain; such a penalization plays the role of the potential term
$W$ in the Ginzburg-Landau energy, ensuring that the number of
dislocations is bounded by $|\log\e|$, and in particular that the
measure of the core region tends to zero as $\e\to 0$.
\par
 This approach, as proposed by Bethuel, Brezis and H\'elein,  can be
very fruitful also as a variant of the Ginzburg-Landau approach to
vortices. The variational equivalence argument introduced in this
paper seems to be a natural tool to compare the core radius
approach to dislocations and vortices with pure discrete
approaches, that we believe to be equivalent.
\par
Finally, we aim to comment that the core radius approach has been
proposed in \cite{BBH} in order to compute the {\it renormalized
energy}, {\it i.e.},  the lower order term in the energy of minimizers
of the Ginzburg-Landau functionals in the logarithmic regime. This
inspired the work in \cite{CL}, where the authors compute the
Taylor expansion of  the elastic energy of edge dislocations in a
plane. The first term in the Taylor expansion is the self energy,
that for screw dislocations is given (up to a pre-factor) by
$|\mu|(\Om)$ (see Remark \eqref{newproof}). The lower order term,
corresponding to the renormalized energy, depends on the mutual
distance of the dislocations, and models the Peach-K\"ohler
attractive and repulsive forces between dislocations. In our
opinion it would be very interesting to investigate the
equivalence of this renormalized energy for vortices and
dislocations within a $\Gamma$-convergence analysis. In order to
do that, it seems necessary to exploit the equivalence between
vortices and dislocations for the first order term in the
$\Gamma$-limit expansion of the energy functionals. Indeed,
Definition \ref{equivalence} could be generalized to compare lower
order terms in the energy, in the spirit of the theory introduced
by Braides and Truvskinovsky in \cite{BrTr}.  The analysis required
to compare the renormalized energy for vortices and dislocations
seems to be very challenging.

\subsection{Three dimensional dislocations}
Screw dislocations are essentially straight dislocations lines
with parallel Burgers vector. The  general case accounts more
complexity, dealing with general  closed loops of dislocations
with given Burgers vectors.
\par
Here we aim to introduce the elastic energy in this three
dimensional setting, and the corresponding Ginzburg-Landau energy
functionals, formally obtained arguing in analogy with what  we
have done in the anti-planar setting. We will not address the
problem of proving rigorous  equivalence results between elastic
energies and Ginzburg-Landau energies in this three dimensional
context.
\par
 We consider the basic case of a
cubic lattice, whose elasticity tensor we denote by $\C\in
\R^9\times\R^9$. Let $\Om\subset \R^3$ be the reference
configuration of the three dimensional crystal.
 The displacement is now a vectorial function from $\oez:=\Om\cap
\e\Z^3$ to $\R^3$, and,  following the approach in \cite{ArOr},
 the dislocations are defined on the class of two-cells
of the crystal. In this three dimensional case, we have three kind
of two-cells, corresponding to the three different slip planes of
the cubic lattice. A basic dislocation loop is identified with a
pair,  given by a two cell and a Burgers vector $b$, that in the
cubic case belongs to the canonical base of $\R^3$. We have then
three kind of topological singularities, corresponding to the
three different slip planes, and each kind of singularity has a
vector nature, being associated to a Burgers vector in $\R^3$. It
is then natural to set up a Ginzburg-Landau model for dislocations
considering the vector valued maps taking values near a three
dimensional torus.
\par
More precisely, consider the three dimensional torus $\Torus:=
\R^3/(\e\Z^3)$, {\it i.e.}, $\R^3$ where we identify $x$ and $y$ if and
only if $x-y\in \e\Z^3$, and consider the fields $w:\Om \to
\Torus\times [0,+\infty)$. We denote by $u\in \Torus$ the first three components of
$w$, playing the role of angular components of a continuous field in classical Ginzburg-Landau theories, and representing in our model the displacement function, and we denote by $\rho$ the fourth one, representing the radial component of $w$. Therefore,   for every $w\in H^1(\Omega,\Torus\times [0,+\infty))$
we define the Ginzburg-Landau energy functionals by
\begin{equation}\label{gltorus}
GL_\e(w):=\int_\Omega \rho^2 <\C \nabla u:\nabla u> + |\nabla \rho|^2 +
\frac{1}{\e^2} (1-\rho)^2 \, dx.  
\end{equation}
In our opinion such functionals provide a good material dependent
Ginzburg-Landau model for dislocations, and this could be
justified by showing that they are indeed equivalent to suitable
discrete elastic energy functionals, defined for instance
according to the formalism introduced in \cite{ArOr}. The rigorous
formalization and proof of such a statement  would require a
specific analysis, that is not addressed in this paper. 
Clearly, more general crystal lattices could be considered within this approach.
\par
Exploiting this Ginzburg-Landau approach to dislocations provides
a  motivation to investigate the asymptotic behaviour of these
Ginzburg-Landau functionals as $\e\to 0$, generalizing in this way
the analysis done in \cite{ABO},\cite{BBH},\cite{JS}, \cite{JS2}, \cite{S}, \cite{San_ser} to the
case of vector valued fields whose singularities belong to a given
group, and whose energy is not isotropic, and not even coercive.

\section*{Acknowledgments}
The authors wish to thank Giovanni Alberti  and Stefan M\"uller for
interesting and fruitful discussions on the subject of the paper. \par The work by M. Cicalese 
was partially supported by the European Research Council under FP7, 
Advanced Grant n. 226234 ''Analytic Techniques for Geometric and Functional Inequalities''

\address{
Roberto Alicandro\\
DAEIMI, Universit\`a di Cassino\\  via Di Biasio 43, 03043 Cassino (FR), Italy\\
e-mail: {\it alicandr@unicas.it}
\and
Marco Cicalese\\
Dipartimento di Matematica e Applicazioni ``R.~Caccioppoli"\\
Universit\`a di Napoli ``Federico II"\\
 via Cintia, 80126 Napoli, Italy\\
e-mail: {\it cicalese@unina.it}
\and
Marcello Ponsiglione\\
Dipartimento di Matematica ``G. Castelnuovo" \\
Universit\'a di Roma ``La Sapienza"\\
Piazzale A. Moro 2, 00185 Roma, Italy\\
 e-mail: {\it ponsigli@mat.uniroma1.it}
}

\end{document}